\documentclass[fleqn,reqno]{article}
\usepackage{amsmath}
\usepackage{amsthm}
\usepackage{amssymb,bbm,graphicx}

\setlength{\textwidth}{6.5in} \setlength{\oddsidemargin}{0.5in}

\newtheorem{thm}{Theorem}[section]
\newtheorem{proposition}[thm]{Proposition}
\newtheorem{remark}[thm]{Remark}
\newtheorem{lemma}[thm]{Lemma}
\newtheorem{prop}[thm]{Proposition}

\newcommand{\R}{\mathbb{R}}

\newcommand{\RR}{\mathbb{R}}

\def\11{\mathbbmss{1}}

\newcommand{\e}{\mathrm{e}}
\renewcommand{\d}{\mathrm{d}}

\renewcommand{\Re}{\operatorname{Re}\,}

\newcommand{\eps}{\varepsilon}

                                           % Underline
                                            % Mathfrak
                                             % Mathcal
                                              % Math Sans serif

% Number system commands

                                                 % Unit sphere
                                                    % Laplacian

\newcommand{\DETAILS}[1]{}

% % \documentclass[10pt]{amsart}

% % \renewenvironment{equation*}{\begin{equation}}{\end{equation}}
% % \renewenvironment{align*}{\begin{eqnarray}}{\end{eqnarray}}
% % \renewenvironment{multline*}{\begin{eqnarray}}{\end{eqnarray}}

% % \theoremstyle{remark}
% % \newtheorem{remark}{Remark}

% % \theoremstyle{definition}
% % \newtheorem{definition}{Definition}
% % \newtheorem{condition}{Condition}

% % %%%%%%%%%%%%%%%%%%%%%%%%%%%%%%%%%%%%%%%%%%%%%%%%%%%%%%%%%%%%%%

% % \newcommand{\R}{\mathbb{R}}
% % \newcommand{\C}{\mathbb{C}}
% % \newcommand{\Z}{\mathbb{Z}}
% % \newcommand{\T}{\mathbb{T}}
  \newcommand{\Rp}{{\mathbb R}_+}
%  \newcommand{\N}{\mathbb N}
%  \newcommand{\D}{\mathcal{D}}
%  \newcommand{\ii}{\mathrm{i}}

%\def\C{{\mathbb C}}
% \def\E{{\mathbb E}}
% \def\N{{\mathbb N}}
% %\def\R{{\mathbb R}}
% \def\Q{{\mathbb Q}}
% %\def\Z{{\mathbb Z}}

\newcommand{\cL}{\mathcal{L}}         %%%%%%%%%%%%%%%%%%%%%%%%
\newcommand{\cM}{\mathcal{M}}         %                      %
         % Calligraphic Letters %
         %                      %
         %%%%%%%%%%%%%%%%%%%%%%%%

%\newcommand{\K}{{\cal K}}
%\newcommand{\cP}{{\cal P}}

%\newcommand{\Q}{{\cal Q}}

\newcommand{\g}{{\gamma}}

\newcommand{\lam}{{\lambda}}           % Frequently used abbreviations   %

                %%%%%%%%%%%%%%%%%%%%%%%%%%%%%%%%%%%

\newcommand{\s}{{\sigma}}
\newcommand{\z}{{\zeta}}

\renewcommand{\d}{\mathrm{d}}

\newcommand{\ran}{\rangle}
\newcommand{\lan}{\langle}
\newcommand{\ra}{\rightarrow}

 \newcommand{\p}{{\partial}}

\newcommand{\grad}{\mbox{grad}\, }

\newcommand{\lb}{\big(}
\newcommand{\rb}{\big)}
\newcommand{\lsb}{\big[}
\newcommand{\rsb}{\big]}

\newcommand{\ls}{\lesssim}

\newcommand{\Ef}{{\mathcal E}_f}
\newcommand{\Er}{{\mathcal E}}

\newcommand{\scalar}[2]{\langle{#1} \mspace{2mu}, {#2}\rangle}
\newcommand{\norm}[1]{\lVert #1 \rVert}
\newcommand{\avg}[1]{\langle #1 \rangle}

\newcommand{\LpNorm}[2]{\big\|#2\big\|}
\newcommand{\HsNorm}[2]{\left\|#2\right\|_{H^{#1}}}

\newcommand{\ip}[2]{\left\langle #1,#2\right\rangle}
\newcommand{\ipp}[2]{( #1,#2 )}

\newcommand{\Question}[1]{}
\newcommand{\SID}[1]{}

 \renewcommand{\O}[1]{\mathrm{O}\lb#1\rb}
 \newcommand{\smallO}[1]{\mathrm{o}\lb#1\rb}

\newcommand{\zic}{\zeta_{bci}}

\newcommand{\const}{\operatorname{const}}
\newcommand{\Lap}[1]{\Delta^{\!\!(#1)}}

\newcommand{\Lab}{{\cal L}_{ab}}
\newcommand{\Fab}{{\cal F}_{ab}}
\newcommand{\Nab}{{\cal N}}

\newcommand{\Labc}{{\cal L}_{abc}}
\newcommand{\Fabc}{{\cal F}_{abc}}

%%%%%%%%%%%%%%%%%%%%%%%%%%%%%%%%%%%%%%%%%%%%%%%%%%%%%%%%%%%%%%%%%%%%%%%%%
\newcommand{\DATUM}{February 10, 2013}  %{July 31, 2012}              %                        %
\pagestyle{myheadings}                         % Date and Page Headings %
\markboth{\hfill{KS February 10, 2013}}{{KS July 31, 2012}\hfill}  %
%%%%%%%%%%%%%%%%%%%%%%%%%%%%%%%%%%%%%%%%%%%%%%%%%%%%%%%%%%%%%%%%%%%%%%%%%

\title{On blowup dynamics in the Keller-Segel model of chemotaxis}

\author{S. I. Dejak\thanks{Department of Mathematics, University of Toronto, Toronto, Canada.}\ \qquad \qquad
D. Egli\thanks{Department of Mathematics University of Toronto, Toronto, Canada.}\ \qquad \qquad
P.M. Lushnikov\thanks{Department of Mathematics and Statistics, University of New Mexico, USA}\ \qquad \qquad
I. M. Sigal\thanks{Department of Mathematics University of Toronto, Toronto, Canada.}}

%\email{im.sigal@utoronto.ca}
\date{\DATUM}
\begin{document}

%\noindent In memory of V.S. Buslaev, a great scientist and a friend\\ %QED\\\
%\,\\
%\,\\

\maketitle

  \centerline{\it In memory of V.S. Buslaev, a scientist and a friend}

  \centerline{\it  Will appear in St.Petersburg Math Journal (issue dedicated to V.S. Buslaev)" }

\bigskip

\begin{abstract}
We investigate the  (reduced) Keller-Segel equations modeling chemotaxis of bio-organisms. We present a formal derivation and partial rigorous results of the blowup dynamics of solution of these equations describing   the chemotactic aggregation of the organisms. Our results are confirmed by numerical simulations and the formula we derive coincides with the formula of Herrero and Vel\'{a}zquez for specially constructed solutions.
\end{abstract}

%%%%%%%%%%%%%%%%%%%%%%%%%%%%%%%%%%%%%%%%%%%%%%%%%%%%%%%%%%%%%%%%%%%%%%%%
%%%%%%%%%%%%%%%%%%%%%%%%%%%%%%%%%%%%%%%%%%%%%%%%%%%%%%%%%%%%%%%%%%%%%%
\section{Introduction}
In this paper we %assemble several results about
analyze the aggregation dynamics in the (reduced) Keller-Segel model of chemotaxis.
Chemotaxis is the directed movement of organisms in response to the concentration gradient of an external chemical signal and is common in biology.  The chemical signals can come from external sources or they can be secreted by the organisms themselves.  %The latter situation leads to aggregation of organisms and to the formation of patterns.

Chemotaxis is believed to underly many social activities of micro-organisms, e.g.\ social motility, fruiting body development, quorum sensing and biofilm formation. A classical example
is the dynamics and the aggregation of {\it Escherichia coli} colonies under starvation conditions \cite{BrLeBu1998}. %of such processes are
Another example is the   {\it Dictyostelium} amoeba , where single cell bacterivores,
 when challenged by adverse conditions, form multicellular structures of $\sim 10^5$ cells \cite{Bo1967,CFTV}.  Also, endothelial cells of humans react to vascular endothelial growth factor to form blood vessels
 through aggregation \cite{CarmelietNatMEd2000}.
 %This is a special case of multicellular development and quorum sensing.

Consider organisms moving and interacting in a domain $\Omega\subseteq \R^d$, $d=1,2$ or $3$.  Assuming that the organism population is large and the individuals are small relative to the domain $\Omega$, Keller and Segel derived a system of reaction-diffusion equations governing the organism density $\rho:\Omega\times\Rp\rightarrow\Rp$ and chemical concentration $c:\Omega\times\Rp\rightarrow\Rp$.  The equations are of the form
\begin{equation}\label{KS}
\begin{split}
\p_t \rho &=D_\rho\Delta\rho-\nabla\cdot\lb f(\rho)\nabla c\rb\\
\p_t c&=D_c\Delta c+\alpha \rho-\beta c.
\end{split}
\end{equation}
Here $D_\rho$, $D_c$, $\alpha$, $\beta$ are positive functions of $x$ and $t$, $\rho$ and $c$, and $f(\rho)$ is a positive function modeling chemotaxis.  Assuming a closed system, one is led to impose no-flux boundary conditions on $\rho$ and $c$:
\begin{equation}
\p_\nu \rho=0\ \mbox{and}\ \p_\nu c=0\
\mbox{on}\ \p\Omega,
\label{eqn:KS1BC}
\end{equation}
where $\p_\nu g$ is the normal derivative of $g$.  With these boundary conditions, the total number of organisms in $\Omega$ is conserved. %:
%\begin{equation*} \int_\Omega \rho(x,t)\, dx=\int_\Omega \rho(x,0)\, dx.\end{equation*}
We refer the reader to \cite{Bo1967, BrLeBu1998, KeSe1970, Na1973} for more information about chemotaxis and the Keller-Segel model.

We presently concentrate on the case of positive chemotaxis,
where the organisms secrete the chemical and move towards areas of higher chemical concentration.
This leads to aggregation of organisms. % and formation of patterns.
 Mathematically this is expressed as a  blowup (or collapse) of solutions of \eqref{KS}. %(or blowup of their derivatives).
It was first suggested by Nanjundiah in \cite{Na1973} that the density, $\rho$, may become infinite and form a Dirac delta singularity.  One refers to this process as (chemotactic) collapse.
This is, arguably,  the most interesting feature of  the  Keller-Segel equations.
As argued below, the ``collapsing'' profile and contraction law have a universal (close to self-similar)
 form, independent of particulars of initial configurations and, to a certain degree, of the equations themselves, and can be associated with chemotactic aggregation.
 Though the equations are rather crude and unlikely to produce  patterns one observes in nature or experiments, the collapse phenomenon could be useful in  verifying
 assumptions about biological mechanisms.\footnote{There are numerous refinements of the Keller-Segel equations, e.g.\ taking into account finite size of organisms
 (\cite{AlberChenGlimmLushnikov2006,AlberChenLushnikovNewman2007,LushnikovChenalberPRE2008}) preventing complete collapse, which model chemotaxis more precisely. We believe the techniques we outline and develop here can be applied to these models as well.}

 Phenomena of blowup and collapse in nonlinear evolution equations %of a delicate nature and
are hard to simulate numerically and the rigorous theory, or at least a careful analysis, is  pertinent here.
The recent years witnessed a tremendous progress in the development of such theories. %rigorous understanding of critical blow-up and collapse.
We can now describe  the shape of blowup profile and contraction law %now have rigorous theory of blowup
in Yang-Mills, $\s-$model,  nonlinear Schr\"odinger and heat  equations (\cite{RS, RR, KST1, KST2, OS, BOS, MR, MZ, DGSW})
\footnote{Numerical simulations for these equations failed until the compression rate was derived analytically, see \cite{BOS, OS, SS}}.
Yet, after 40 years of intensive research and important progress, we still cannot give a rigorous description of collapse in the Keller-Segel equations modeling chemotaxis.
(See \cite{BrLeBu1998,Ve1, Ve2, BCC, BCL, BCM, BDEF, BDP, BeCL} for some recent works, \cite{BrCoKaScVe1999}, for a nice discussion of the subject,  %references to earlier works can be found in this papers
and \cite{Na2001, Ho2003, Ho2004, HP, Per} for reviews.)

This is not to say that the Keller-Segel equations are harder than Yang-Mills, $\s-$model, or nonlinear Schr\"odinger equations, they are not, but neither are they less important.

There are three common approximations made in the literature for system \eqref{KS}.  Firstly, one assumes that the coefficients in \eqref{KS} are constant and satisfy
\begin{equation}
\begin{split}
\epsilon:=\frac{D_\rho}{D_c}\ll 1,\ \tilde{\alpha}:=\frac{\alpha}{D_c}=\O{1}\ \mbox{and}\ \tilde{\beta}:=\frac{\beta}{D_c}\ll 1.
\end{split}
\label{approxrelations}
\end{equation}
The first of these conditions states that the chemical diffuses much faster than the organisms do.  This is the case in practically all situations.  As a result of this relation, %in \eqref{approxrelations},
one drops the $\p_t c$ term in \eqref{KS} (after rescaling time $t\rightarrow t/D_\rho$, this term becomes $\epsilon\p_t c$).  Secondly, one takes $f(\rho)$ to be a linear function $f(\rho)=K\rho$.
Thirdly,  the term $\beta c$ in \eqref{KS} is neglected compared with $\alpha\rho$, as one expects that it would not effect the blow-up process where $\rho\gg 1$ (it is also small due to the last relation in \eqref{approxrelations}). %, we can neglect the decay term $\g c$ in \eqref{KS2}. %$\tilde{\beta} c$.
%Dropping this term from \eqref{KS2} we arrive at the system of equations
These approximations,  after rescaling, lead to the system
 \begin{equation}\label{KS3} \begin{split}
\frac{\p \rho}{\p t}&=\Delta \rho-\nabla\cdot\lb\rho \nabla c\rb,\\
0&=\Delta c+\rho ,
\end{split}
\end{equation}
with $\rho$ and $c$ satisfying the no-flux Neumann boundary conditions.

 Equations \eqref{KS3} %  is the simplest model of positive chemotaxis considered in the literature.
   in three dimensions also appear in the context of stellar collapse  (see \cite{HeMeVe1997, Wo1992, ChavSir, SirChav});  similar equations---the Smoluchowski or nonlinear Fokker-Planck equations---models non-Newtonian
complex fluids (see \cite{Doi, Lar, CKT1, CKT2}. This is the equation studied in this paper.

We emphasize that in dropping the time derivative term of $c$, we have made the adiabatic approximation, in which the chemical is assumed to reach its steady state given by the second equation of \eqref{KS3} instantaneously.

In this paper, we consider the collapse of radially symmetric solutions to the reduced Keller-Segel system \eqref{KS3} on the plane $\R^2$ with a smooth, positive and integrable initial condition $\rho_0$ %.  The functions $\rho=\rho(x,t)$ and $c=c(x,t)$, satisfy
and with the boundary conditions $\rho,\nabla\rho,\nabla c\rightarrow 0$ as $|x|\rightarrow\infty$.
To provide a right context for the discussion below, we mention that equation \eqref{KS3} has the following key properties:
\begin{itemize}
\item It is invariant under the scaling transformations
\begin{align}\label{scaling}
\rho(x,t)\rightarrow\frac{1}{\lambda^2}\rho\lb\frac{1}{\lambda}x,\frac{1}{\lambda^2}t\rb\
\mbox{and}\ c(x,t)\rightarrow c\lb\frac{1}{\lambda} x,\frac{1}{\lambda^2} t\rb.
\end{align}
\item
It has the static solution,
\begin{equation} \label{stat-sol}
R(x):=\frac{8}{(1+|x|^2)^2},\ C(x):=-2\ln(1+|x|^2).\end{equation}
\item  The total ``mass'' is conserved:
%\begin{equation*}
$ \int_\Omega \rho(x,t)\, dx=\int_\Omega \rho(x,0)\, dx.$ %\end{equation*}
\end{itemize}
We also mention that \eqref{KS3} (as well as \eqref{KS} ) is a gradient flow, $\p_t\rho=\nabla\cdot\rho\nabla\Er' (\rho)$, or $\p_t\rho=-\mathrm{grad}\, \Er(\rho)$, where $\Er' (\rho)$ is the formal $L^2-$gradient of $\Er$ and $\mathrm{grad}\, \Er(\rho)$ is the formal gradient of $\Er$ in the space with metric
$\ip{v}{w}_J:=-\ip{v}{J^{-1} w}_{L^2}$. Here $J:=\nabla\cdot\rho\nabla\le 0$, whose inverse is unbounded operator, and
$\Er(\rho)$  is the ``energy'' functional given by
\begin{align}\label{energy}\Er(\rho)&= \int_{\R^2} (-\frac{1}{2}\rho \Delta^{-1}\rho+\rho\ln\rho-\rho)\, dx  %|_{c=-\Delta^{-1}\rho}.
\end{align}
(see Appendix \ref{sec:grad} for more details).
%This is obtained from \eqref{eqn:Ef} by dropping the quadratic term $\frac{1}{2}c^2$ (as \eqref{KS-adiabatic} does not have a decay term for $c$), replacing $c$ with $-\Delta^{-1} \rho$ in the remaining terms and using that $f(\rho)=\rho$.
%(Since $\p_\rho\Er(\rho)\xi=\int (-c)\xi+\xi\ln\rho$, this gives the first equation with  $c=-\Delta^{-1} \rho$.)
We remark that the first term of $\Er$ can be thought of as the internal energy of the system and the remaining terms are the entropy. The solution \eqref{stat-sol} is a minimizer of ${\cal E}$ under the constraint that $\int\rho=\const.$
Note that $\int_{\R^2} R\, dx=8\pi$, which is the source of $8\pi$ in \eqref{blowupcrit}.
Under the scaling \eqref{scaling}, the total mass changes as
\begin{equation*}
\int\frac{1}{\lambda^2}\rho\lb\frac{1}{\lambda}x,0\rb=\lambda^{(d-2)}\int\rho\lb
x,0\rb.
\end{equation*}
%This shows that there is no collapse along the family %$\lambda^{-2}\varrho(\lambda^{-1} x)$ with $\lambda\rightarrow 0$
Thus one does not expect collapse for $d=1$, and that collapse
is possible for $d\geq 2$ with critical collapse for $d=2$
and supercritical collapse for $d>2$. (Equation \eqref{KS3} in $d=2$ is said to be $L^1-$critical, etc.)

Take $\rho_0\ge 0$. One has the following criteria for blowup of solutions of \eqref{KS3} (\cite{Na1995, Bi1998}):
%\begin{thm}[See \cite{Bi1998}]
 If the dimension $d=2$ and the total mass satisfies
\begin{equation} \label{blowupcrit}
M:=\int_{\R^2} \rho_0\, dx>8\pi,
\end{equation}
or, if $d\ge 3$ and
%\begin{equation*}
$\frac{\int_{\R^d}x^2\rho_0dx}{\int_{\R^d}\rho_0dx}$
%\end{equation*}
is sufficiently small (this means that $\rho_0$ is concentrated at
$x=0$), then the solution to \eqref{KS3} blows up in finite time.

There is a fair amount of work done on equations \eqref{KS} %, \eqref{KS2}
 and \eqref{KS3} and closely related equations. We give a very brief and incomplete review of it.
Childress and Percuss \cite{ChPe1981} found that collapse for \eqref{KS} with $f$ linear does not occur when $d=1$ and can occur when $d\ge 3$.  For the two-dimensional case, they advanced arguments that collapse requires a threshold number of organisms.  This threshold behaviour was confirmed by J\"{a}ger and Luckhaus in \cite{JaLu1992}
\DETAILS{ for systems of the form \eqref{KS2}.  They proved that there exists a constant $c_1$, depending on the bounded, $C^1$ domain $\Omega$, such that, if $\bar{\rho_0}:=|\Omega|^{-1}\int_\Omega \rho_0 < c_1$, then there exists a unique, smooth global solution $\rho$ to \eqref{KS2}. \textbf{???} On the other hand, if $\Omega$ is a disk, then there exists another constant $c_2=c_2(\Omega)$ such that if $\bar{\rho_0}>c_2$, then there are radially symmetric solutions that blowup at the origin in finite time:
\begin{equation*}
 \lim_{t\rightarrow T} \rho(0,t)=\infty,
\end{equation*}
for some $0<T<\infty$.  It was later proved that the two thresholds, $c_1$ and $c_2$, are both equal to $8|\Omega|$ \textbf{???}}
(see also \cite{Na1995, NaSe1998, NaSe1997, NaSeYo1997, NaSeYo1997, NaSeSu2000,Na2001}). %However,

Herrero and Vel\'{a}zquez proved that there exist radial solutions of \eqref{KS3} for $d=2$ with the threshold mass $8|\Omega|$ collapsing to a Dirac delta singularity in finite time (see \cite{HeVe1996a}).  Also, unlike previous results, the authors give an explicit asymptotic expression of the developing singularity.  They proved using matched asymptotics and a topological argument that for $T>0$ there exists a radial solution to \eqref{KS3}, %(with $\tilde{\alpha}=\tilde{\beta}=1$),
which blows up at $r=0$ and $t=T$ and is of the form
\begin{equation*}
 \rho(r,t)=\frac{1}{\lambda(t)^2}R_{\lambda(t)} (1+\smallO{1})+
\left\{
\begin{array}{ll}
0& r<\lambda(t)\\
\O{\frac{e^{-\sqrt{2}|\ln(T-t)|^\frac{1}{2}}}{r^2}} & r\ge\lambda(t),
\end{array}\right.
\end{equation*}
as $t\rightarrow T$, where $R_{\lambda}(r):= R(r/\lambda),\ R(r)$ is the stationary solution to \eqref{KS3} (see \eqref{stat-sol}) %\textbf{(What is the stationary solution to \eqref{KS2}???)} %, see \eqref{R},
and
\begin{equation*}
\lambda(t)=C(T-t)^\frac{1}{2} e^{-\frac{1}{\sqrt{2}}|\ln(T-t)|^\frac{1}{2}}|\ln(T-t)|^{\frac{1}{4}|\ln(T-t)|^{-\frac{1}{2}}-\frac{1}{4}}(1+\smallO{1}).
\end{equation*}
They also considered collapse of solutions to \eqref{KS} with linear $f(\rho)$ (see \cite{HeVe1996b} and \cite{HeVe1997}).  Obtaining similar results, they suggest that J\"{a}ger and Luckhaus' adiabatic assumption does not affect the collapse mechanism.
 %{\bf Note also that }
 In the papers \cite{Lush, DLV} Lushnikov et al derived the  log-log scaling as well as corrections  beyond leading order log-log scaling.

 As noted in \cite{HeVe1996a}, the asymptotics reproduced above are not of self-similiar type; that is, they are not of the form $(T-t)\Phi(r/(T-t)^\frac{1}{2})$ for some function $\Phi$.  In fact, as shown in \cite{HeMeVe1998}, self-similiar blowup is not possible.  Lastly, we mention that similar work has been done for the three dimensional case, where existence of collapsing shock waves has been shown.  We refer the reader to \cite{HeMeVe1998, HeMeVe1997, BrCoKaScVe1999} for these results.

The above results are valid for radially symmetric  domains and initial conditions.  It was shown by numerous authors that the blowup threshold mentioned above decreases for the non-spherically symmetric situation.  Moreover, Dirac delta singularities may develop on the boundary of the domain.  We refer the reader to \cite{Bi1995, Bi1998,GaZa1998,Ho2001,HoWa2001, Na20012} for details. In \cite{Ve1}, Vel\'{a}zquez considers small radial and non radial perturbations of a collapsing solution and concludes, using formal matched asymptotics, that they are stable to these perturbations, leading only to small shifts in the blowup time and the blowup point.  Existence of blowup or bounded solutions when $f(\rho)$ is nonlinear was recently studied in \cite{HoWi2005}.  We also refer to \cite{Ho2002} for a blowup result of a related Keller-Segel model.  Lastly, we refer the reader to Horstmann \cite{Ho2003, Ho2004} for a more complete review of the literature including results on other models of chemotaxis and on the derivation of the Keller-Segel model as a continuous limit of biased random walks (see e.g. \cite{Oe1989, OtSt1997, St2000}).

In spite of the considerable progress,  the question of whether %the singularity was of collapsing type, i.e. whether
the mass collects in isolated points, forming Dirac delta distributions, remained unanswered.  Moreover, these results give no information about the dynamics of blowup. These are the questions we address.

Now, we describe the results of the present paper. Given a radially symmetric initial condition $\rho_0(r)>0$ sufficiently close to some $R_{\lambda_0}$, for some $\lambda_0$, and satisfying $\int\rho_0>\int R$, we show (formally, but with some rigorous supporting results) that the solution $\rho(x,t)$ to \eqref{KS3} is of the form
\begin{equation*}
\rho(x,t)=\frac{1}{\lambda^2(t)}R_{\lambda(t)}(r)(1+\smallO{1})
\end{equation*}
with $\lambda(t)\rightarrow 0$ as $t\rightarrow T$ for some $0<T<\infty$.  Thus, all the mass $\int\rho\, dx$ collapses to the single point $x=0$ in finite time, or equivalently, the density $\rho$ forms a Dirac delta singularity with weight $8\pi$ in finite time. Furthermore, we show that the compression scale, $\lambda$,
has the following explicit asymptotics
 \begin{equation} \label{lambdaAs}
\lambda(t)=c(T-t)^\frac{1}{2}e^{-\frac{1}{\sqrt{2}}|\ln(T-t)|^\frac{1}{2}}|\ln(T-t)|^{\frac{1}{4}%|\ln(T-t)|^{-\frac{1}{2}}-\frac{1}{4}
}(1+\smallO{1})
\end{equation}
for some constant $c$. In Figure \ref{fig1} we compare the blowup asymptotics
 \eqref{lambdaAs} with direct numerical simulation of \eqref{KS3}.

  We also give an estimate of the error term, $\rho(x,t)-R_{\lambda(t)}(r)$, in the case when the nonlinear part in equation \eqref{phi-eq'}, given below, can be neglected.
%As $t\rightarrow T$, our formal estimate of the error term gives that $\rho(x,t)-R_{\lambda(t)}(r)$ is of the order $\O{a\ln\frac{1}{a}}$ in  weighted supremum norm $\|f\|_{\Lw{2}}:=\sup_{[0,\infty)}|(1+r^2)e^{\frac{a}{4} r^2} f|$.
We believe that our results and our analysis can be made rigorous and can be extended to the full Keller-Segel system.
 \begin{figure}[ht]
 \includegraphics[width=5in]{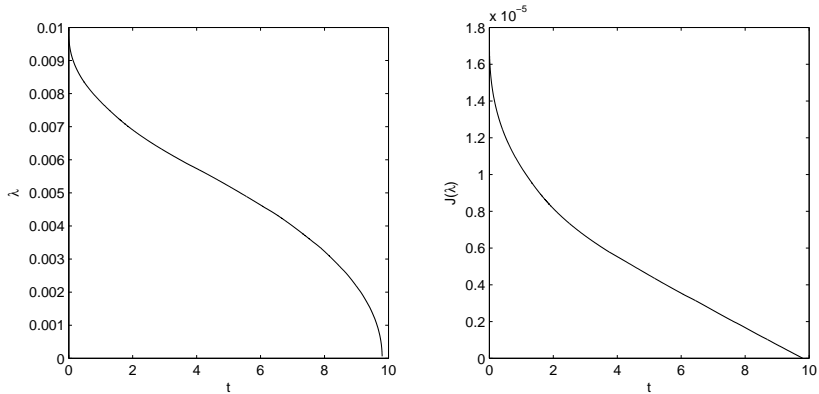}
\caption{The left pane shows the scaling parameter $\lambda(t)$ obtained by numerically computing the solution to \eqref{KS3} with the initial condition $m_0:=4 y^2/(1+\delta y+y^2)|_{y=r/\lambda_0}$ with $\lambda_0=-\delta=.01$.  The right pane plots the quantity $J(\lambda):=  \dfrac{e^{\sqrt{4\ln \frac{\lambda_0}{\lambda}}}}{\sqrt{\ln \frac{\lambda_0}{\lambda}}}(\frac{\lambda}{\lambda_0})^2
% %=1-{\rm erf}\lb\sqrt{\ln\frac{1}{\lambda^2}}\rb
 $ against time, which according to \eqref{lambdaAs} %\eqref{IntroBUlambda}-\eqref{IntroBUt}
should be linear as the blowup time is approached.}
\label{fig1}
 \end{figure}

%Since the blowup profile is expected to be radially symmetric, it is natural to start with radially symmetric solutions.
We outline the approach used in this paper.  In the case of  radially symmetric solutions,  the system  \eqref{KS3}, which consists of  coupled parabolic and elliptic PDEs, %. Of course, we can solve the elliptic one for %the one unknown, $c(x,t)$  and exclude it from the equations, but this would lead to an integro-differential equation for $\rho(x,t)$. However, for radially symmetric solutions, there is
is equivalent to a single PDE. Indeed, the change of the unknown, by passing from the density, $\rho(x,t)$, to the normalized mass, %variables which takes into account the radially symmetric natural of the problem (see).  We define
\begin{equation*}
m(r,t):=\frac{1}{2\pi}\int_{|x|\le r}\rho(x,t)\ dx,
\end{equation*}
  of organisms contained in a ball of radius $r$, %normalized by $2\pi$,
discovered by  \cite{JaLu1992, BrCoKaScVe1999}, maps two equations \eqref{KS3} into a single equation  %If $\rho(x,t)$ is a solution to \eqref{KS-adiabatic} with initial condition $\rho_0$, then $m(r,t)$ is a solution to (if $r\p_r c\rightarrow 0$ as $r\rightarrow 0$)
\begin{equation}\label{m-eq}
\p_t m=\Lap{0}_r m+r^{-1} m\p_r m,
\end{equation}
on $(0,\infty)$ (with initial condition $m_0(r):=\frac{1}{2\pi}\int_{|x|\le r}\rho_0 (x)\, dx$).  Here $\Lap{n}_r$ is the $n$-dimensional radial Laplacian, $\Lap{n}_r:=r^{-(n-1)}\p_r r^{n-1}\p_r=\p_r^2+\frac{n-1}{r}\p_r$. %Equation \eqref{m-eq} is complemented with the boundary conditions $m(0,t)=0$ and $\lim_{r\rightarrow \infty} m(r,t)$ exists. %$\p_r^\alpha m(r,t)\rightarrow 0$ as $r\rightarrow R$ for $\alpha=1,2$. The latter boundary condition ensures that the total mass, $2\pi\lim_{r\rightarrow R} m(r,t)$, exists. %=\lim_{r\rightarrow R}m(r,0)$.
Thus \eqref{KS3} in the radially symmetric case is equivalent to \eqref{m-eq} and therefore we concentrate on the latter equation.

%To see what to expect, we mention that
The properties of equation \eqref{KS3} discussed above imply the following key properties  equation of \eqref{m-eq}
\begin{itemize}
\item It is invariant under the scaling transformations
%\begin{align}\label{scal}
$m(r,t)\rightarrow m\lb\frac{1}{\lambda}r,\frac{1}{\lambda^2}t\rb.$ %\end{align}
\item
It has the static solution (coming from the static solution $R(r)=\frac{8}{(1+r^2)^2}$ of \eqref{KS3}),
\begin{equation}\label{defn:chi} \chi(r):=\frac{4 r^2}{1+r^2}.
\end{equation}

\item  The total ``mass'' is conserved:
%\begin{equation*}
$2\pi \lim_{r\ra\infty} m(r,t)=\int \rho(x, t)dx = \const.$ %\end{equation*}
\end{itemize}
Note that the stationary solution  has total mass $2\pi \lim_{r\ra\infty} \chi(r)=8\pi$, which,  recall,  is the sharp threshold between global existence and singularity development in solutions to \eqref{KS3} (see \eqref{blowupcrit}). %\cite{Na1995}).

The properties above yield, as in the case of \eqref{KS3}, the manifold of static solutions %\begin{equation*}
${\cal M}_0:=\{\chi(r/\lambda)\ |\ \lambda>0\}$ %\end{equation*}
%where we used the convenient shorthand notation $f_\lambda(r)=f(r/\lambda),$ which we apply only for the subscript $\lambda$. %The tangent space of ${\cal M}_0$ at a point $\chi_\lambda$ is spanned by the vector $\zeta_\lambda$, where \begin{equation*}\zeta:=r\p_r\chi=\frac{8r^{2}}{1+r^{2}},\end{equation*}which gives the zero mode for the linearized equation. Furthermore,
and suggest a likely scenario of collapse: sliding along ${\cal M}_0$ in the direction of $\lambda \ra 0$.
To analyze the collapse, we  pass to the reference frame collapsing with the solution, by introducing the adaptive blowup variables, %. Assume we are given a positive differentiable function $\lambda:[0,T)\rightarrow[0,\infty),\  T>0$,  We pass to blowup %($\lambda(t)\rightarrow 0$ as $t\rightarrow T$)variables by considering the new unknown function $\phi$ defined by
\begin{equation*}
m(r,t)=u (y,\tau),\ \quad \mbox{where}\ \quad y=\frac{r}{\lambda}\
\mbox{and}\ \tau=\int_0^t \frac{1}{\lambda^2(s)}\, ds,
\end{equation*}
where $\lambda:[0,T)\rightarrow[0,\infty),\  T>0,$ is  a positive differentiable function (\textit{compression} or \textit{dilatation} parameter), such that $\lambda(t)\rightarrow 0$ as $t\uparrow T$. The advantage of passing to blowup variables is that the function $u$ is expected to have bounded derivatives and the blowup time is eliminated from consideration (it is mapped to $\infty$).  %As a result the blowup problem is mapped into the problem of asymtotic dynamics of solitons, which was already studied in pioneering works of \cite{SW1, SW2, BP, BS, TY}.
Writing \eqref{m-eq} in blowup variables, we find the equation for the rescaled mass function
\begin{equation}\label{u-eq}
\p_\tau u =\Lap{0}_yu +y^{-1}u_\lam\p_y u  -a y\p_yu,
\end{equation}
where $a:=-\dot{\lambda}\lambda$. %We use the subscript $\lambda$ on $u_\lambda$ to remind the reader that the function $u_\lambda$ is constructed using the function $\lambda$.
 Now,  the blowup problem for \eqref{m-eq} is mapped into the problem of asymptotic dynamics of solitons for  the equation \eqref{u-eq}, which was already studied in the pioneering works of \cite{SW1, SW2, SW3, BP, BS, TY1, TY2, TY3, GaSig1, GaSig2}.

 The boundary conditions on $u $ are $\p_y^\alpha u (y,\tau)\rightarrow 0$ as $y\rightarrow\infty$ for $\alpha=1,2$.  As with the boundary conditions for \eqref{m-eq}, these imply that mass is conserved: $\lim_{y\rightarrow \infty}u (y,\tau)=\lim_{y\rightarrow\infty}u (y,0)$.
Equivalently, $u $, as a solution of \eqref{u-eq}, depends on $a$, which determines $\lambda$, given $\lambda(0)=\lambda_0$, according to the formula
\begin{equation}\label{eqn:lambda.in.terms.of.a}
 \lambda^2(t)=\lambda_0^2-2\int_0^t a(s)\, ds.
\end{equation}

Equation \eqref{u-eq} has the static solution $(\chi(y), a=0)$.  %A simple analysis shows that this solution is unstable.
 It is shown in \cite{dlos} that the linearized operator on this solution  has one negative eigenvalue $-2a+\frac{a}{\ln\frac{1}{a}}+\O{a\ln^{-2}\frac{1}{a}}$
 (corresponding to the scaling mode---for a fixed parabolic scaling it is connected to possible variation of the blowup time) \footnote{A similar analysis applies also in the subcritical case $M<8\pi$ where the solution converges to a self-similar one as $\tau\ra\infty$,  which vanishes as $t\ra\infty$. In this case  the operator $\Lab$ has strictly positive spectrum.}  and one near zero %($\frac{2a}{\ln\frac{1}{a}}+\O{a\ln^{-2}\frac{1}{a}}$)
eigenvalue, while the third eigenvalue, $2a+\frac{2a}{\ln\frac{1}{a}}+\O{a\ln^{-2}\frac{1}{a}}$, is positive, but vanishing as $a\ra 0$. (%We also mention that the result and the proof below
 It also isolates the correct perturbation (adiabatic) parameter---$\frac{1}{\ln\frac{1}{a}}$.) Hence we have to construct a one-parameter deformation of $\chi(y)$ (besides the parameter $\lambda$, or $a$). For technical reasons it is convenient to use a two-parameter family,  $\chi_{bc}(y)$
\begin{equation}\label{chibc}
 %\chi_{bc}(y):=\frac{4c y^2}{1+ by^2},\ \quad \mbox{or}\ \quad
 \chi_{bc}(y):=\frac{4b y^2}{c+ y^2},
\end{equation}
with $b>1$ and both parameters $b$ and $c$ are close to $1$, with an extra relation between the parameters $a,\ b$ and $c$.
The  family  $\chi_{bc}(y)$ gives approximate solutions to \eqref{u-eq}  (see \eqref{chi-sol}) and forms  the deformation (or almost center-unstable) manifold $\cM:=\{\chi_{bc}(r/\lambda)\ |\ \lambda>0,\ p\}$. %\footnote{The functions $\chi_{bc}(y)$ are approximate solution to \eqref{u-eq}. Indeed, let $\Phi(u)$ be the map defined by the right hand side of \eqref{u-eq}, $\Phi(u):=\Lap{0}_yu+y^{-1}u\p_yu -a y\p_y u$, then}
We expect that the solution to \eqref{u-eq}  approaches this manifold as $\tau\ra \infty$, and therefore
we decompose  the solution $u(y,\tau)$ to \eqref{u-eq} as the leading term, $\chi_{b(\tau)c(\tau)}(y)$, and the fluctuation, $\phi(y,\tau)$,
 \begin{equation} \label{orth-deco}
u(y,\tau)=\chi_{b(\tau)c(\tau)}(y)+\phi(y,\tau),
\end{equation}
and require that the fluctuation $\phi(y,\tau)$ is orthogonal to  the tangent space of  $\cM$ at $\chi_{b(\tau)c(\tau)}(y)$, % \begin{equation} \label{orthog}
$\lan\p_p\chi_{p(\tau)}(\cdot), \phi(\cdot,\tau)\ran=0,$ %\end{equation}
where $p:=(b, c)$. Note that this family evolves on a different spatial scale than $\phi(y, \tau)$ in \eqref{orth-deco}, as it can rewritten as $\chi_{bc}(y)=\chi_{\frac{b}{c}, 1}(\frac{y}{\sqrt{c}})=\chi_{bc}$.

 In parametrizing solutions as above, we split the dynamics of \eqref{KS3} into a finite-dimensional part describing motion over the manifold, $\cM$, and an infinite-dimensional fluctuation (the error between the solution and the manifold approximation) which is supposed to stay small.
  Substituting the decomposition \eqref{orth-deco} into the equation \eqref{u-eq}, we arrive at  the equation
\begin{equation}\label{phi-eq'}
 \p_\tau\phi=-\Labc\phi+\Fabc+\Nab(\phi),
\end{equation}
 where ${\cal L}$ is a self-adjoint linear operator, $\mathcal{F}$ is a forcing term, and ${\cal N}$ is a quadratic nonlinearity.
Due to the definition of  $\cM$, it turns out that its tangent space is very close to the subspace spanned by  the negative and almost zero
spectrum eigenfunctions (unstable modes) of the linearized operator, $\Labc$, and therefore $\phi$ is (approximately) orthogonal to the latter subspace. %As a result, we  which confirms our expectation that the solution to \eqref{u-eq'}  approaches this manifold as $\tau\ra \infty$.

The contraction law is obtained by using  the orthogonality condition, $\lan\p_{bc}\chi_{bc}, \phi\ran=0$. The latter is equivalent to two conditions,
\begin{equation} \label{orthog}
\p_\tau\lan\p_{bc}\chi_{b(\tau)c(\tau)}(\cdot), \phi(\cdot,\tau)\ran=0
\end{equation}
and $\lan\p_{bc}\chi_{b(\tau)c(\tau)}(\cdot), \phi(\cdot,\tau)\ran|_{t=0}=0$, which  lead, to leading order, to the differential equation
\begin{align}\label{orth-eqns-lead7}
 a_\tau = -  \frac{2a^2}{\ln(\frac{1}{a})},
\end{align}
whose solutions, to leading order,  are \eqref{lambdaAs} (see Section \ref{sec:blowup-dyn}).

We now describe the organization of this paper.
In Section \ref{sec:param}, solutions to \eqref{KS3} are parametrized  by the parameters $(a, b, c, \phi)$ connected to $u$ by \eqref{u-eq}. In Section \ref{sec:Labc-genprop}, we study the operator $\Labc$  in \eqref{phi-eq'} and  show that it has one negative eigenvalue and one simple eigenvalue near zero.  We also give approximate eigenfunctions corresponding to these eigenvalues and prove that $\Labc$ is positive on the space orthogonal to these quasi-eigenfunctions.
%In Section \ref{sec:rel-abc}, we derive a relationship between the blowup parameters $a,\ b$ and %the introduced parameter
%$c$. % to give approximately %orthogonality to the second eigenfunction over the blowup time.
In Section \ref{sec:blowup-dyn}, we state the  relationship between the blowup parameters $a,\ b$ and %the introduced parameter
$c$, whose proof is given in Appendix \ref{sec:rel-abc}, and use it to %project the dynamics of %\eqref{eqn:Keller.Segel.Adiabatic} onto the deformed manifold and
obtain a dynamical equation for the blowup parameter $a=- \lambda\p_t\lambda$ and derive the leading order behaviour of the scaling parameter $\lambda$ in terms of the original time variable.
In Section \ref{sec:Labc-lowerbnd}, we derive the lower bounds on the operator $\Labc$. We use these bounds in
 in Section \ref{sec:fluct} in order to control the fluctuation $\phi$ in the linearized equation, i.e.\ for \eqref{phi-eq'}, with the nonlinearity $\Nab(\phi)$ omitted.

In Appendix \ref{sec:CompleteSetStaticSolns} we present the family of solutions to \eqref{m-eq},
\begin{equation*}
\chi^{(\mu)}(r):=\frac{r^{\mu-2}\mu+4-\mu}{r^{\mu-2}+1}
\end{equation*}
with mass $2\pi\mu$, where $\mu\in (2,4]$.  These solutions describe partial collapse with $2\pi(4-\mu)$ units of mass
concentrated at the origin. In the remainder of our work we will make no further use of these partially collapsed solutions. %and focus on solutions with a mass of $8\pi$.
In Appendix \ref{sec:orthog-deco-pf} we provide a proof of the orthogonal splitting theorem of Section \ref{sec:param} and in  Appendix \ref{sec:grad} we discuss te gradient structure of equations \eqref{KS} and \eqref{KS3}.

In the following discussion, we use the notation $f\lesssim g$ if there exists a positive constant $C$ such that $f\le C g$ holds.  If the inequality $|f|\le C|g|$ holds then we write $f=\O{g}$.  We also write $f\ll g$ or $f=\smallO{g}$ if $f(a)/g(a)\rightarrow 0$ as $a\rightarrow 0$ and $f\sim g$ if the quotient converges to 1.

\medskip

\noindent {\bf Acknowledgements.} The research of the second and fourth authors is partially supported by NSERC under Grant NA7901, and of the third author, by NSF under Grants DMS 0719895 and DMS 0807131.

%%%%%%%%%%%%%%%%%%%%%%%%%%%%%%%%%%%%%%%%%%%%%%%%%%%%%%%%%%%%%%%%%%%%%%%%%%%%%%%%%%%%%%%%%%%%%%%%%%%%%%%%%%%%%%%%%%%%%%%%%%%%%%%%%%%%%%%%%%%%%%%%%%%%%%%%%%%%%%%%%%%%
%%%%%%%%%%%%%%%%%%%%%%%%%%%%%%%%%%%%%%%%%%%%%%%%%%%%%%%%%%%%%%%%%%%%%%%%%%%%%%%%%%%%%%%%%%%%%%%%%%%%%%%%%%%%%%%%%%%%%%%%%%%%%%%%%%%%%%%%%%%%%%%%%%%%%%%%%%%%%%%%%%%%
%%%%%%%%%%%%%%%%%%%%%%%%%%%%%%%%%%%%%%%%%%%%%%%%%%%%%%%%%%%%%%%%%%%%%%%%%%%%%%%%%%%%%%%%%%%%%%%%%%%%%%%%%%%%%%%%%%%%%%%%%%%%%%%%%%%%%%%%%%%%%%%%%%%%%%%%%%%%%%%%%%%%
%%%%%%%%%%%%%%%%%%%%%%%%%%%%%%%%%%%%%%%%%%%%%%%%%%%%%%%%%%%%%%%%%%%%%%%%%%%%%%%%%%%%%%%%%%%%%%%%%%%%%%%%%%%%%%%%%%%%%%%%%%%%%%%%%%%%%%%%%%%%%%%%%%%%%%%%%%%%%%%%%%%%
\section{Parametrization of Solutions}\label{sec:param}

We parameterize solutions $u_\lambda(y,\tau)$ of equation \eqref{u-eq} by the parameters $a,\ b$ and $c$, and the fluctuation $\phi$ according to
\begin{equation} \label{basic-deco}
u_\lambda(y,\tau)=\chi_{bc}(y)+\phi(y,\tau)
\end{equation}
where $\lambda =\lambda(\tau) ,\ b=b(\tau)$ and $a=a(\tau)$.
Substituting decomposition \eqref{basic-deco} into equation \eqref{u-eq} gives that the fluctuation $\phi$ satisfies
\begin{equation}\label{phi-eq}
 \p_\tau\phi=-\Labc\phi+\Fabc+\Nab(\phi),
\end{equation}
where %,  in the second case, $\chi_{bc}(y):=\frac{4b y^2}{c+ y^2}$,
the linear operator, the forcing terms, and the nonlinear term are
 \begin{align}
&\Labc :=-\Lap{4}-\frac{8bc}{(c+ y^2)^2}-\frac{4}{y}(b-1-\frac{bc}{c+ y^2})\p_y +a y\p_y,
%+2\lb \frac{ a}{b}+\frac{ b_\tau}{b^2} \rb\frac{1}{1+b y^2}-\frac{b_\tau}{b^2},
\label{Labc}\\
&\Fabc :=-4b_\tau +4\frac{b_\tau c+bc_\tau-2bca}{c+ y^2}
+4bc\frac{8(b-1)+2ac-c_\tau}{(c+ y^2)^2}-\frac{32bc^2(b-1)}{(c+ y^2)^3},\label{Fabc}
\\ &\mathcal{N}(\phi):=y^{-1}\phi\p_y\phi .\label{Nabc}
\end{align}
%An important fact here is that the operator $\Labc$  self-adjoint on the space $L^2(\Rp, \g_{a b}(y) y^3 dy)$, where
Consider the weighted $L^2-$space $L^2(\Rp, \g_{a b}(y) y^3 dy)$, with the weight
\begin{equation}\label{gauge-bc}
\g_{a b c}^{-1/2}(y)=\frac{4 y^{2}e^{\frac{a}{4} y^2}}{(c+ y^2)^{b}},
\end{equation}
and the corresponding inner product
 \begin{equation}\label{ip-bc}
 \ip{f}{g}:=\int_0^\infty f(y) g(y)\, \g_{a b c}(y) y^3 dy. \end{equation}
 The norm corresponding to this inner will be denoted by $\|\cdot\|$. The significance of this space is %in the following result which follows by a simple computation  \begin{lemma} The gauge function, in which
that, as we show below, the operator $\Labc$ is self-adjoint on it. % the space $L^2(\Rp, \g_{a b}(y) y^3 dy)$.\end{lemma}

\medskip

\noindent {\bf Remark.} Another way to write $\Labc$ is as
 \begin{align}\label{Labc'}
&\Labc =-\Lap{0}-\frac{8bc}{(c+ y^2)^2}-\frac{4 by}{c+ y^2}\p_y +a y\p_y,
%+2\lb \frac{ a}{b}+\frac{ b_\tau}{b^2} \rb\frac{1}{1+b y^2}-\frac{b_\tau}{b^2},
\end{align}
and treat it as a self-adjoint operator on $L^2(\Rp, \tilde\g_{a b}(y) y^3 dy)$, with weight
%\begin{equation}\label{gauge-bc'}
$\tilde\g_{a b c}^{-1/2}(y)=\frac{e^{\frac{a}{4} y^2}}{(c+ y^2)^{b}}, $ %\end{equation}
and corresponding inner product
 %\begin{equation}\label{ip-bc'}
$ \ip{f}{g}:=\int_0^\infty f(y) g(y)\, \tilde\g_{a b c}(y) y^{-1} dy.  $ %\end{equation}

\medskip

The decomposition \eqref{basic-deco} is not unique and as a result we have a single equation, \eqref{phi-eq}, for four unknowns, $a,\ b,\ c$ and $\phi$. Hence
we supplement equation \eqref{phi-eq} with three additional equations. Two of the equations can be chosen as in \cite{DGSW} to make the parameters  $a,\ b, $ and $c$ satisfy a chosen relation, say $f(a, b, c)=0$. In addition, we have the relations
\begin{equation} \label{orthoconditions-bc}
\ip{\phi}{\zic}=0,\ i=0,1,\
\end{equation}
in $L^2(\R^+, \g_{a b c}(y) y^3dy)$, for all times $\tau>0$,  where $\zic$ are the tangent vectors %defined above.
%Consider the tangent vectors
to the manifold $\cM:=\{\chi_{bc}(r/\lambda)\ |\ \lambda>0,\ b,\ c\}$:
\begin{align}\label{zbc}
\zeta_{bc0}(y):=\frac{1}{8 bc}y\p_y\chi_{bc} (y)&=\frac{ y^{2}}{(c+y^{2})^2},\  \quad \zeta_{bc1}(y):=\frac{1}{4}\p_b\chi_{bc} (y)=\frac{y^{2}}{c+ y^{2}},\\ &\zeta_{bc2}(y):=-\frac{1}{4b}\p_c\chi_{bc} (y)=\frac{y^{2}}{(c+ y^{2})^2}\notag.
\end{align}
(The vectors $\zeta_{bc0}(y)$ and $\zeta_{bc2}(y)$ are seen to be multiples of each other which confirms that one of the parameters is superfluous.) %, more precisely, that the parameters $a$ and $b$ are related. %It is convenient to modify these vectors as \begin{equation} \label{zi} \zi(y):=\frac{y^{2i+2}}{(1+b y^2)^{\frac{2}{b}}}.\end{equation}

We proceed here differently and choose
\begin{equation}\label{orth-eqns}
-\ip{\phi}{\p_\tau\zic+(\p_\tau \ln \g_{a b})\zic}=-\ip{\Labc\phi}{\zic}+\ip{\Fabc}{\zic}+\ip{\Nab}{\zic},\ i=0, 1.
\end{equation}
 As we will show the latter vectors are approximate eigenvectors of  the operator $\Labc$ having the negative and almost zero eigenvalues. In addition, we will choose a relation between the parameters $a,\ b,\ c$.
Eqns \eqref{orth-eqns} imply that
\begin{equation} \label{orthog-concerv}\p_\tau\ip{\phi}{\zic}=0\,,
\end{equation} and therefore %in addition to \eqref{eqn:xiEqn}, we require that $a$ and $b$ are such that
the inner products $\ip{\phi}{\zic},\ i=0, 1,$ are constant (one can think of this a constraint on $a, b$ and $c$).  The next proposition shows that $a_0, b_0$ and $c_0$ can be taken so that $\ip{\zic}{\phi}|_{\tau=0}=0$, and hence, by \eqref{orth-eqns}, we have \eqref{orthoconditions-bc}. To be able to formulate a precise statement we introduce, for a fixed $\delta>0$, open neighbourhoods of $\mathcal{M}$,
\begin{align}
\mathcal{U}_\eps=\{f:\norm{\e^{-\frac{\delta}{3}y^2}\lb f(y)-\chi_{bc}(y)\rb}_\infty<\eps\,, \textrm{for some } 1\leq b\leq 2, %1+\frac{1}{2\e},
 \frac{1}{2}\leq c\leq 1 \}
\end{align}
and, for a fixed $\lambda>0$,
\begin{align}
\mathcal{\tilde{U}}_\eps=\{f(r): f(\frac{r}{\lambda})\in \mathcal{U}_\eps \}\,.
\end{align}
%As stated in the proposition below, splitting \eqref{physicalsplitting} is %partially fixed by requiring that $\ip{\xi}{\zi}=0$ at $\tau=0$.  Then, since %$a$ and $b$ satisfy \eqref{eqn:govzi},
%\textbf{(If, as discussed in the remark after \eqref{exp:lambda3}, we managed with $i=0$, then one parameter would suffice.)}
\begin{prop}\label{prop:orthog-deco} Fix $\lambda_0>0$ and $0<\delta\ll 1$. Then there is an $\eps>0$ and a unique $C^1$ function $g :\mathcal{U}_\eps \rightarrow (\delta,1)\times(1/2,1)$ such that for $m_0 \in \tilde{\mathcal{U}}_\eps$ we have the equation $\ip{\zeta_{b_0c_0i}}{\phi_0}=0$, or in detail, for $i=1,2$,
\begin{equation}\label{orthdeco-ic-bc}
\left.\ip{m_0(\lambda_0\cdot)-\chi_{b_0c_0}}{\z_{b_0c_0i}}\right|_{(a_0, c_0)=g(m_0)}=0.
\end{equation}
\end{prop}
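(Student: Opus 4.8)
The plan is to apply the implicit function theorem. Define, for $f \in \mathcal{U}_\eps$ and parameters $(b,c) \in (\delta,1)\times(1/2,1)$ (I will keep the paper's labels, writing $g$ with values in a two-dimensional parameter region), the map
\begin{equation*}
G(f; b, c) := \big( \ip{f - \chi_{bc}}{\zeta_{bc1}}_{bc},\ \ip{f - \chi_{bc}}{\zeta_{bc2}}_{bc} \big),
\end{equation*}
where the inner product is the one in \eqref{ip-bc} with the weight $\g_{abc}$ (with the relation fixing $a$ in terms of $b,c$, so that $G$ genuinely depends only on $b,c$). The key observation is that when $f = \chi_{b_*c_*}$ lies \emph{on} the manifold $\cM$, then $G(\chi_{b_*c_*}; b_*, c_*) = 0$ trivially. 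So first I would check that for the reference point $m_0(\lambda_0\, \cdot\,) = \chi_{11}$ (i.e. $f = \chi$, the base static solution, corresponding to the parameter values $b=c=1$) we have $G(\chi; 1, 1) = 0$; more generally the zero set of $G(\,\cdot\,; b, c)$ passes through each $(\chi_{bc}; b, c)$.

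The second and central step is to compute the Jacobian $\partial_{(b,c)} G$ at such a point and show it is invertible. Differentiating, the dominant contribution comes from the terms where the derivative hits $\chi_{bc}$ inside $G$, producing
\begin{equation*}
\partial_{(b,c)} G\big|_{(\chi_{bc};b,c)} = -\Big( \ip{\partial_b \chi_{bc}}{\zeta_{bc1}},\ \ip{\partial_c \chi_{bc}}{\zeta_{bc1}};\ \ip{\partial_b \chi_{bc}}{\zeta_{bc2}},\ \ip{\partial_c \chi_{bc}}{\zeta_{bc2}} \Big)
\end{equation*}
plus lower-order pieces coming from the $(b,c)$-dependence of the weight $\g_{abc}$ and of $\zeta_{bc1},\zeta_{bc2}$ themselves. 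Since $\partial_b \chi_{bc} = 4\zeta_{bc1}$ and $\partial_c\chi_{bc} = -4b\,\zeta_{bc2}$ by \eqref{zbc}, the leading Jacobian is (up to a nonzero scalar) the Gram matrix of $\{\zeta_{bc1}, \zeta_{bc2}\}$ in the weighted inner product, scaled by $(1, -b)$ in the columns. The functions $\zeta_{bc1}(y) = y^2/(c+y^2)$ and $\zeta_{bc2}(y) = y^2/(c+y^2)^2$ are linearly independent (one decays like $1$, the other like $y^{-2}$ at infinity), so their Gram matrix is positive definite, hence invertible, and a Schur/continuity argument shows the full Jacobian stays invertible on a neighbourhood (shrinking $\eps$ and using $0<\delta\ll1$ so the weight factors and the extra terms are controlled).

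Given invertibility of the Jacobian, the implicit function theorem (in the $C^1$ category, since $\chi_{bc}$, $\zeta_{bci}$ and $\g_{abc}$ depend smoothly on the parameters, and $f\mapsto G(f;\cdot)$ is affine hence smooth) yields a unique $C^1$ map $g: \mathcal{U}_\eps \to (\delta,1)\times(1/2,1)$ with $G(f; g(f)) = 0$, which is precisely \eqref{orthdeco-ic-bc} after unwinding the substitution $\phi_0 = m_0(\lambda_0\,\cdot\,) - \chi_{b_0c_0}$ and $m_0 \in \tilde{\mathcal{U}}_\eps \Leftrightarrow m_0(\lambda_0\,\cdot\,) \in \mathcal{U}_\eps$. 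The main obstacle I anticipate is \emph{not} the abstract IFT step but the bookkeeping needed to make it rigorous: one must verify that the inner products $\ip{f-\chi_{bc}}{\zeta_{bci}}$ are actually finite and continuous in $(b,c)$ for every $f\in\mathcal{U}_\eps$ — this is where the exponential weight $e^{-\frac{\delta}{3}y^2}$ in the definition of $\mathcal{U}_\eps$ is essential, since it dominates the polynomial growth of $y^3\g_{abc}(y)$ and of $\zeta_{bci}(y)$ at infinity — and that the correction terms in the Jacobian (from differentiating the weight and the $\zeta$'s) are genuinely lower order uniformly on $\mathcal{U}_\eps$, which requires choosing $\eps$ small depending on $\delta$ and $\lambda_0$. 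I would relegate these estimates to the appendix, as the paper indicates it does.
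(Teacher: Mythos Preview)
Your proposal is correct and follows the same implicit function theorem strategy as the paper. Two points of comparison are worth noting. First, the paper parametrizes by $(a,c)$, with the relation $b=1+\tfrac{1}{2}a\ln(1/a)$ determining $b$ from $a$ (not the other way around as you write); since the proposition's map $g$ outputs $(a_0,c_0)$, you should match this convention, though the two parametrizations are equivalent via a smooth change of variables. Second, for Jacobian invertibility the paper computes the determinant explicitly as $|\det A|=\frac{1}{64a^{2}}(\ln^{2}\tfrac{1}{a}+O(1))$, whereas your Gram-matrix/Cauchy--Schwarz argument (linear independence of $\zeta_{bc1},\zeta_{bc2}$ forces a nonzero Gram determinant) is more conceptual and sidesteps the explicit integral estimates; both routes are valid, and yours is arguably cleaner. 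One minor correction: at the exact reference point $f=\chi_{bc}$ the ``lower-order pieces'' you mention from differentiating $\gamma_{abc}$ and the $\zeta_{bci}$ vanish \emph{identically}, since they multiply $f-\chi_{bc}=0$, so the Jacobian there is exactly the Gram-type matrix you describe with no corrections to control.
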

For the proof of this proposition see Appendix \ref{sec:orthog-deco-pf}.

Equations \eqref{phi-eq} and \eqref{orth-eqns} form a system of coupled, partial and ordinary differential equations for the parameters $\phi$, $a$, $b$, and $c$.  We assume that this system has a unique local solution given initial conditions $\phi_0$, $a_0, b_0$ and $c_0$, the values of which are related to the initial value of $m$ (recall that
$u_\lambda( y)=m(\lambda y)$).

%%%%%%%%%%%%%%%%%%%%%%%%%%%%%%%%%%%%%%%%%%%%%%%%%%%%%%%%%%%%%%%%%%%%%%%%%%%%%%%%%%%%%%%%%%%%%%%%%%%%%%%%%%%%%%%%%%%%%%%%%%%%%%%%%%%%%%%%%%%%%%%%%%%%%%%%%%%%%%%%%%%%%%%%%%%%%%%%%%%%%%%%%%%
%%%%%%%%%%%%%%%%%%%%%%%%%%%%%%%%%%%%%%%%%%%%%%%%%%%%%%%%%%%%%%%%%%%%%%%%%%%%%%%%%%%%%%%%%%%%%%%%%%%%%%%%%%%%%%%%%%%%%%%%%%%%%%%%%%%%%%%%%%%%%%%%%%%%%%%%%%%%%%%%%%%%%%%%%%%%%%%%%%%%%%%%%%%%
\section{General Properties of the Operator $\Labc$}
\label{sec:Labc-genprop}

%We conclude this section with the following proposition
 Before proceeding we discuss general properties of the operator $\Labc$ mentioned above and used below,
\begin{prop}\label{prop:Labc-sa}
The operator $\Labc$, defined on $ L^2([0,\infty), \g_{a b c}(y) y^3 dy)$ (with inner product \eqref{ip-bc}), is self-adjoint and has purely discrete spectrum.   Moreover, we have the lower bound
\begin{align}\label{cLabc-lowerbound}
\scalar{\phi}{\Labc\phi}\gtrsim -[2a+\frac{(b-1)+a}{\sqrt{\ln\frac{1}{a}}}]\norm{\phi}^2  \,.
\end{align}
\end{prop}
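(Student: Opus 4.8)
The plan is to realize $\Labc$ as a singular Sturm--Liouville operator and to prove \eqref{cLabc-lowerbound} by a ground-state (Jacobi) substitution using the explicit positive function $\zeta_{bc0}$ of \eqref{zbc}. First I would put $\Labc$ in divergence form: starting from \eqref{Labc'} and using $\p_y\ln\!\big(\g_{abc}\,y^3\big)=\frac{4by}{c+y^2}-\frac1y-ay$, one checks that
\[
\Labc=-\frac{1}{\g_{abc}\,y^3}\,\p_y\!\big(\g_{abc}\,y^3\,\p_y\,\cdot\,\big)-\frac{8bc}{(c+y^2)^2}
\]
(cf.\ \eqref{gauge-bc}), so $\Labc$ is the Friedrichs realization of a nonnegative divergence-form operator plus the bounded potential $-8bc/(c+y^2)^2$. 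Conjugating by $\phi\mapsto(\g_{abc}y^3)^{1/2}\phi$ turns $\Labc$ into a Schr\"odinger operator $-\p_y^2+W$ on $L^2(\Rp,dy)$ with $W(y)\sim\tfrac34\,y^{-2}$ as $y\to0$ and $W(y)\sim\tfrac{a^2}{4}\,y^2$ as $y\to\infty$. Both endpoints are then in the limit-point case ($\tfrac34>-\tfrac14$ at $0$, and $W\to+\infty$ at $\infty$), so $-\p_y^2+W$ --- hence $\Labc$ --- is essentially self-adjoint on $C_c^\infty(0,\infty)$ with no boundary condition needed; and since $W\to+\infty$ the resolvent is compact and the spectrum purely discrete.

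For the bound, an integration by parts (legitimate since the endpoints are limit-point) gives
\[
\scalar{\phi}{\Labc\phi}=\int_0^\infty|\p_y\phi|^2\,\g_{abc}y^3\,dy-\int_0^\infty\frac{8bc}{(c+y^2)^2}\,|\phi|^2\,\g_{abc}y^3\,dy .
\]
Since $\zeta_{bc0}(y)=y^2/(c+y^2)^2$ is strictly positive on $(0,\infty)$ and lies in the form domain, I would set $\phi=\zeta_{bc0}\,v$ and integrate by parts once more (the ground-state identity) to obtain
\[
\scalar{\phi}{\Labc\phi}=\int_0^\infty\zeta_{bc0}^2\,|\p_y v|^2\,\g_{abc}y^3\,dy+\int_0^\infty\frac{\Labc\zeta_{bc0}}{\zeta_{bc0}}\,|\phi|^2\,\g_{abc}y^3\,dy ,
\]
while a direct computation from \eqref{Labc'} and $\zeta_{bc0}=y^2/(c+y^2)^2$ gives the explicit quotient
\[
\frac{\Labc\zeta_{bc0}}{\zeta_{bc0}}(y)=\frac{2a(c-y^2)}{c+y^2}+\frac{8(b-1)(y^2-2c)}{(c+y^2)^2}.
\]
The first term is $>-2a$ (tending to $-2a$ only as $y\to\infty$); the second is $\ge0$ when $y^2\ge2c$ and $\ge-16(b-1)/c\ge-32(b-1)$ in general (using $b>1$ and $c\ge\tfrac12$). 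Discarding the nonnegative gradient term, this already yields $\scalar{\phi}{\Labc\phi}\gtrsim-\big(a+(b-1)\big)\norm{\phi}^2$.

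To reach the sharper bound \eqref{cLabc-lowerbound}, in which the $(b-1)$-part is damped by $1/\sqrt{\ln\frac{1}{a}}$, I would keep the gradient term $\int\zeta_{bc0}^2|\p_y v|^2\g_{abc}y^3\,dy$ and localize. On $\{y^2\ge2c\}$ the quotient is $\ge-2a$, so this region is harmless; on the bounded set $\{y^2<2c\}$, where the quotient drops to $-2a-\O{b-1}$, I would absorb the $\O{b-1}$ deficit into the gradient term by a weighted Hardy/Poincar\'e inequality on $(0,\sqrt{2c}\,)$, the logarithmic gain arising from the ratio of the weight-measure carried by this bounded set to the logarithmically large quantity $\norm{\zeta_{bc0}}^2$ (which is of order $\ln\frac{1}{a}$), together with the relation between $a,b,c$ established in Section \ref{sec:blowup-dyn} and Appendix \ref{sec:rel-abc}. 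The main obstacle is precisely this last step --- carrying out the localization and the Hardy/Poincar\'e absorption so that it produces the correct factor $1/\sqrt{\ln\frac{1}{a}}$, and making it consistent with the quasi-eigenfunction estimates used elsewhere in this section; the divergence-form rewriting, the endpoint analysis, and the integrations by parts are routine by comparison.
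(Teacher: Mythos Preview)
Your treatment of self-adjointness and discreteness is fine and essentially parallel to the paper's: the paper conjugates by $\g_{abc}^{1/2}$ to the Schr\"odinger operator $L_{abc}$ of \eqref{Lab-unit} on $L^2(y^3\,dy)$ and reads off self-adjointness and compact resolvent from the confining potential $\tfrac14 a^2y^2$. Your limit-point analysis is a legitimate alternative packaging of the same fact.

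The substantive divergence is in the lower bound. Your Jacobi substitution with $\zeta_{bc0}$ gives rigorously and cleanly the coarse estimate $\scalar{\phi}{\Labc\phi}\gtrsim -(a+(b-1))\norm{\phi}^2$, but your route to the sharper bound \eqref{cLabc-lowerbound} --- absorbing the $O(b-1)$ deficit on $\{y^2<2c\}$ into the gradient term via a localized Poincar\'e/Hardy inequality --- is not carried out, and as you acknowledge it is the main obstacle. In fact a pointwise lower bound on the quotient $\Labc\zeta_{bc0}/\zeta_{bc0}$ cannot do better than $-2a-O(b-1)$, so any improvement must exploit the structure of the measure, and you would have to make the mass comparison you allude to fully quantitative. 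Also, invoking the relations of Section~\ref{sec:blowup-dyn} between $a,b,c$ here would be circular: \eqref{cLabc-lowerbound} is stated for general $a,b,c$ and is used as input to that section.

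The paper's argument sidesteps this obstacle entirely. It computes $\Labc\zeta_{bc0}$ (eq.~\eqref{Lzbc}), observes $\|(\Labc+2a)\zeta_{bc0}\|\lesssim (b-1)+a$, and then \emph{divides by the norm} $\|\zeta_{bc0}\|\sim\sqrt{\ln\frac{1}{a}}$ (eq.~\eqref{zbc-norms}) to get the normalized quasimode estimate \eqref{LZeta-norm}. The factor $1/\sqrt{\ln\frac{1}{a}}$ thus appears for free from the fact that $\zeta_{bc0}$ has logarithmically large norm while the error $(\Labc+2a)\zeta_{bc0}$ decays one power of $(c+y^2)$ faster and so has bounded norm. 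Combined with Lemma~\ref{Lemma:SpectrumL0} (Perron--Frobenius: the ground state of $\cL_{0bc}$ is simple, positive, and equals zero), this pins the bottom of $\sigma(\Labc)$ near $-2a$ with the stated accuracy. This is both shorter and conceptually closer to the spectral picture used throughout the paper; your Jacobi approach, while more self-contained for the coarse bound, does not naturally produce the logarithmic gain without the extra work you flag.
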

\begin{proof}  One can check the self-adjointness of $\Labc$ directly or use the unitary map $ \xi(y)\ra  \g_{a b c}^{1/2}(y)\xi(y)$, from $ L^2([0,\infty), \g_{a b c}(y) y^3 dy)$ to $ L^2([0,\infty), y^3 dy)$ to map this operator into the operator
\begin{align}\label{Lab-unit-cLab}
L_{a b c}:=\g_{a b c}^{1/2}\Labc\g_{a b c}^{-1/2},
\end{align}
acting on $L^2([0,\infty), y^3 dy)$ with inner product  $\ipp{\xi}{\eta}:=\int \xi\eta y^3 dy$. The latter operator can be explicitly computed to be
\begin{align}\label{Lab-unit}
L_{a b c} :=-\Lap{4}+\frac{1}{4}a^2 y^2-2ab+\frac{2b\lb 2(b-1)+ac\rb}{c+y^2}-\frac{4bc(b+1)}{(c+ y^2)^2}\,.
\end{align}
% \begin{align}\label{Lab-unit}
% L_{a b c} :=-\Lap{4}-\frac{8bc}{(c+ y^2)^2} +\frac{1}{4}a^2 y^2+\frac{4b(1-b) y^2}{(c+ y^2)^2}-2a+ W(x). %\frac{4(1-b) y^2}{(1+b y^2)^2} + \frac{2 a}{b}\frac{1}{1+b y^2}- \frac{2 a}{b}
% \end{align}
It is of Schr\"odinger type with the real continuous potential tending to $\infty$ as $y \rightarrow \infty$ as $O(y^2)$. Hence, using standard arguments (see e.g. \cite{GS}), one can show that $L_{a b c}$ is self-adjoint and its spectrum, and hence the spectrum of $\Labc$, is purely discrete.

Now, we investigate the bottom of the spectrum of the operator $\Labc$. We begin with the operator $\cL_{0bc}:=\Labc|_{a=0}$.
In what follows  we use the convenient shorthand notation $f_\lambda(r)=f(r/\lambda)$, which we apply only for the subscript $\lambda$.

\begin{lemma}\label{Lemma:SpectrumL0}
The function $\zeta_{bc}(y):=\frac{ y^2}{(c+ y^2)^{2b}}$ %, given in \eqref{zbc},
is a zero mode of the operator $\cL_{0bc}:=\Labc|_{a=0}$:
\begin{equation} \label{cLz}
\cL_{0bc}\zeta_{bc}=0.
\end{equation}
The spectrum of $\cL_{0bc}$ starts with $0$, which is a simple eigenvalue.
\end{lemma}
\begin{proof} Since $\chi_\lambda$ is a static solution to \eqref{m-eq}, differentiating the equation $\Lap{0}_r\chi_\lambda+r^{-1}\chi_\lambda\p_r\chi_\lambda=0$ with respect to $\lambda$ at $\lambda=1$, gives that
\begin{equation}\label{zeta}
\zeta:=r\p_r\chi=\frac{8r^{2}}{(1+r^{2})^2}
\end{equation}
 is a zero mode of the linearization of \eqref{m-eq} around $\chi$:
\begin{equation}\label{cL0}
\cL_0\zeta=0,
\end{equation}
%Thus $\zeta$ satisfies the equation $\cL_0\zeta=0$,
where
$\cL_0:=-\Lap{0}_r+r^{-1}\p_r\cdot\chi$. (The vector $\zeta_\lambda$ spans the tangent space of the manifold ${\cal M}_0:=\{\chi_\lambda\ |\ \lambda>0\}$ at a point $\chi_\lambda$.)  We deform this result as \eqref{cLz}.
Consequently, by the Perron-Frobenius theorem, the spectrum of $\cL_{0bc}$ starts with $0$, which is a simple eigenvalue.
\end{proof}
 The results above can be translated to the operator
 $L_{bc}:=L_{a b c}|_{a=0}:=\g_{0 b c}^{1/2}\cL_{0 b c}\g_{0 b c}^{-1/2}$, which is explicitly given by
%$\mathcal L_b$ defined in \eqref{defn:L0}.
%whose spectrum is given in the next proposition.
\begin{equation} \label{eqn:OpLb}
 L_{bc}=-\Lap{4}-\frac{8bc}{(c+ y^2)^2}%+\frac{1}{4}(a^2+a_\tau) y^2
+\frac{4b(1-b) y^2}{(c+ y^2)^2}.
\end{equation}
This is a deformation in $b$ and $c$ of the operator %${\cal L}_1$ given in \eqref{L1}:
\begin{equation} \label{L1}L_0:=-\Lap{4}_y-\frac{8}{( 1+y^2)^2}.\end{equation}
The operator  $L_{bc}$, defined on the Hilbert space $L^2([0,\infty),y^{3} dy)$, is self-adjoint with spectrum $[0,\infty)$.
The bottom of the spectrum, $0$, is a simple eigenvalue, %a resonance,
 \begin{equation} \label{zeromode-bc-eqn}
  L_{bc} \eta_{bc} =0,\ \quad \mbox{with}\  \quad
 \eta_{bc}:=4\g_{0 b c}^{1/2}\zeta_{bc} =\frac{1}{(c+ y^2)^b}.
\end{equation}

The tangent vectors  $\zeta_{bc0}(y)$ and $\zeta_{bc1}(y)$ (to the manifold $\cM$) are approximate eigenfunctions of the operator $\Labc$. Indeed, first observe that the functions $\chi_{bc}(y)$ are approximate solution to \eqref{u-eq}. Indeed, let $\Phi(u)$ be the map defined by the right hand side of \eqref{u-eq}, $\Phi(u):=\Lap{0}_yu+y^{-1}u\p_yu -a y\p_y u$, then
\begin{align} \label{chi-sol}
\Phi(\chi_{bc})(y)=-\frac{8bca}{c+ y^2}
+8bc\frac{4(b-1)+ac}{(c+ y^2)^2}-\frac{32bc^2(b-1)}{(c+ y^2)^3}.\end{align}
Now, differentiating $\Phi(\chi_{bc})$ with respect to $c$ and $b$ and using \eqref{chi-sol}, we obtain
\begin{align}\label{Lzbc}
\Labc\zeta_{bc0}(y)&=-2a\zeta_{bc0}(y)+[4\frac{4(b-1)+ac}{c+ y^2}-24\frac{c(b-1)}{(c+ y^2)^2}
]\zeta_{bc0}(y),\notag\\
\Labc\zeta_{bc1}(y)&=[\frac{2ca}{c+ y^2}
-\frac{8c(2b-1)}{(c+ y^2)^2}]\zeta_{bc1}(y).
\end{align}
Though on the first sight $\zeta_{bc0}$ and, especially, $\zeta_{bc1}$ do not seem to be approximate eigenfunctions of $\Labc$, in fact they are. Indeed, assuming $b-1=O(a\ln\frac{1}{a}), c=O(1)$,
we obtain
\begin{align}\label{LZeta}
\|(\Labc+2a\delta_{i0})\zic\|&\lesssim \left\{
\begin{array}{ll}
(b-1)+a
 & i=0,\\
1 & i=1.
\end{array}
\right.
\end{align}
However, if one takes into account the normalizations
\begin{align}\label{zbc-norms}
\|\zeta_{bc0}\|=\frac{1}{4\sqrt{2}}\ln^{\frac{1}{2}}\frac{1}{a}+O(\frac{1}{\ln^{\frac{1}{2}}\frac{1}{a}}),\qquad \|\zeta_{bc1}(y)\|=\frac{\sqrt{2}}{a}+O(\ln^2\frac{1}{a}),
\end{align}
then, for the normalized vectors we have
\begin{align}\label{LZeta-norm}
\|(\Labc+2a\delta_{i0})\frac{\zic}{\|\zic\|}\|&\lesssim \left\{
\begin{array}{ll}
\frac{(b-1)+a}{\sqrt{\ln\frac{1}{a}}}
 & i=0,\\
a & i=1.
\end{array}
\right.
\end{align}
The relation for $i=0$ implies \eqref{cLabc-lowerbound}.  \end{proof}

%%%%%%%%%%%%%%%%%%%%%%%%%%%%%%%%%%%%%%%%%%%%%%%%%%%%%%%%%%%%%%%%%%%%%%%%%%%%%%%%%%%%%%%%%%%%%%%%%%%%%%%%%%%%%%%%%%%%%%%%%%%%%%%%%%%%%%%%%%%%%%%%%%%%%%%%%%%%%%%%%%%%%%%%%%%%%%%%%%%%%%%%%%%%
%%%%%%%%%%%%%%%%%%%%%%%%%%%%%%%%%%%%%%%%%%%%%%%%%%%%%%%%%%%%%%%%%%%%%%%%%%%%%%%%%%%%%%%%%%%%%%%%%%%%%%%%%%%%%%%%%%%%%%%%%%%%%%%%%%%%%%%%%%%%%%%%%%%%%%%%%%%%%%%%%%%%%%%%%%%%%%%%%%%%%%%%%%%%
%%%%%%%%%%%%%%%%%%%%%%%%%%%%%%%%%%%%%%%%%%%%%%%%%%%%%%%%%%%%%%%%%%%%%%%%%%%%%%%%%%%%%%%%%%%%%%%%%%%%%%%%

%%%%%%%%%%%%%%%%%%%%%%%%%%%%%%%%%%%%%%%%%%%%%%%%%%%%%%%%%
%%%%%%%%%%%%%%%%%%%%%%%%%%%%%%%%%%%%%%%%%%%%%%%%%%%%%%%%%%
%%%%%%%%%%%%%%%%%%%%%%%%%%%%%%%%%%%%%%%%%%%%%%%%%%%%%%%%%
\section{Relation between Parameters $a,\ b$ and $c$ and  Blowup Dynamics} \label{sec:blowup-dyn} %\label{sec:rel-abc}

In this section,  we state the relations between the parameters $a, b$ and $c$, which is obtained by evaluating the equations in \eqref{orth-eqns} and is proven in Appendix \ref{sec:rel-abc}. Using these relations, we find the governing equation for $a(\tau)$.
\begin{prop}\label{thm:RELATabc}
Let $d:=b-1$ and assume
\begin{align}\label{phi-assump}
\LpNorm{2}{\phi}\ll (\ln\frac{1}{a})^{-1}\,,%\ln\frac{1}{a}.
\end{align}
 and, for simplicity, %either
 $d\ls a\ln(a^{-1})$. %or  $ a \ls d\ls 1$.
 Then
\begin{align}\label{abc-eq1}
&c_\tau +%[S_{03}(\phi) -aS_{13}(\phi)]
S_{0}(\phi, a, b, c) a_\tau=4\lb\frac{ac}{2}-\frac{d}{\ln(\frac{1}{a})}\rb+ O(\frac{a}{\ln(\frac{1}{a})})+\mathcal{R}_0(\phi, a, b, c),\\
 &d_\tau+S_{1}(\phi, a, b, c)a_\tau =-\frac{2da}{\ln(\frac{1}{a})}+ O(\frac{a^2}{\ln(\frac{1}{a})}+a^2 d \ln(\frac{1}{a}))+\mathcal{R}_1(\phi, a, b, c),
\label{abc-eq2}\end{align}
where $S_{i}(\phi, a, b, c)$ and $\mathcal{R}_{i}(\phi, a, b, c),\ i=0,\ 1,$ satisfy the estimates
\begin{align}\label{Si}
&|S_{i}(\phi, a, b, c)|\lesssim \LpNorm{2}{\phi}(a\ln(a^{-1}))^{i-1},\\
&|\mathcal{R}_{i}(\phi, a, b, c)|\lesssim %\frac{a^{i-1}\norm{\phi}}{\ln(a^{-1})}. \\
%&\frac{a}{\ln(a^{-1})}(d+a)^{1-i}\LpNorm{2}{\phi}+\frac{a}{\ln(a^{-1})}\LpNorm{2}{(c+ y^2)^{-\frac{2-i}{2}}\phi}^2\label{Rem12} \\
\frac{a^{i+1}}{\ln^{1-i}(a^{-1})}\LpNorm{2}{\phi}+\frac{a^{i}}{\ln(a^{-1})}\LpNorm{2}{\phi}^2. \label{Rem-i} \end{align}
%$S_{1}(\phi, a, b, c):= \frac{1}{4}\ip{\phi}{ y^2\zic},$ satisfying the estimates $|S_{i3}(\phi)|\lesssim\LpNorm{2}{\phi}a^{-\frac{i}{2}-b}.$
\end{prop}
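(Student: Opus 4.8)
The plan is to derive \eqref{abc-eq1}--\eqref{abc-eq2} by evaluating the two orthogonality equations \eqref{orth-eqns} for $i=0,1$ and extracting the leading-order contributions. First I would expand the left-hand side of \eqref{orth-eqns}. Writing $\zic = \zeta_{bci}$ with $p=(b,c)$ and $d=b-1$, one has $\p_\tau\zic = (\p_b\zic)\, d_\tau + (\p_c\zic)\, c_\tau$, and similarly $\p_\tau\ln\g_{abc}$ is linear in $a_\tau$, $d_\tau$, $c_\tau$ (from the explicit weight \eqref{gauge-bc}). Pairing against $\zic$ and using the explicit forms in \eqref{zbc}, all the $\tau$-derivative terms organize into $\scalar{\phi}{\p_b\zic}\,d_\tau + \scalar{\phi}{\p_c\zic}\,c_\tau$ plus $a_\tau$-proportional terms coming from $\p_\tau\ln\g_{abc}$ paired with $\zic$; this is the origin of the coefficients $S_i$. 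The claimed bound \eqref{Si}, $|S_i|\lesssim \LpNorm{2}{\phi}(a\ln a^{-1})^{i-1}$, should follow by Cauchy--Schwarz in the weighted space together with the normalization asymptotics \eqref{zbc-norms} (the factor $(a\ln a^{-1})^{i-1}$ reflecting that $\|\zeta_{bc1}\|\sim a^{-1}$ while $\|\zeta_{bc0}\|\sim \ln^{1/2}a^{-1}$), and one must be careful that the $a$-dependence of the weight $\g_{abc}$ enters only mildly since $e^{\frac a4 y^2}$ is close to $1$ on the region where $\zic$ and $\phi$ are concentrated.

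Next I would handle the right-hand side of \eqref{orth-eqns}. The term $\scalar{\Fabc}{\zic}$ is a pure integral of explicit rational functions of $y$ against the weight, computable in closed form up to $O(\cdot)$ errors using $\int_0^\infty \frac{y^{k}}{(c+y^2)^m}y^3\,dy$-type integrals; this produces the main driving terms $4(\tfrac{ac}{2}-\tfrac{d}{\ln a^{-1}})$ in \eqref{abc-eq1} and $-\tfrac{2da}{\ln a^{-1}}$ in \eqref{abc-eq2}, together with the stated $O(\cdot)$ remainders, where the logarithms arise from the (near-)logarithmic divergence of $\int^\infty y^{-1}\,dy$ cut off at scale $y\sim a^{-1/2}$ by the Gaussian weight. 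The term $\scalar{\Labc\phi}{\zic} = \scalar{\phi}{\Labc\zic}$ (self-adjointness, Proposition \ref{prop:Labc-sa}) is estimated by \eqref{LZeta}: for $i=0$ it contributes the $-2a\scalar{\phi}{\zeta_{bc0}}$ piece plus an error $\lesssim ((b-1)+a)\LpNorm{2}{\phi}$, but since \eqref{orthog-concerv}--\eqref{orthoconditions-bc} force $\scalar{\phi}{\zic}=0$, the surviving contribution is the error, which feeds into $\mathcal R_i$; for $i=1$ one uses the cruder bound and the normalization $\|\zeta_{bc1}\|\sim a^{-1}$ to land at the claimed power of $a$ in \eqref{Rem-i}. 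Finally $\scalar{\Nab}{\zic} = \scalar{y^{-1}\phi\p_y\phi}{\zic}$ is bilinear in $\phi$; after integrating by parts to move the derivative off $\phi$ (picking up $\p_y(\zic\, y^{2}\g_{abc})$, which is again explicit) and applying Cauchy--Schwarz one gets the $\LpNorm{2}{\phi}^2$ contribution to $\mathcal R_i$ with the stated weights.

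The last step is purely algebraic: assemble the two identities, divide through by the (nonzero, explicitly computed) leading coefficients of $c_\tau$ and $d_\tau$, absorb the subleading coefficient corrections into the $S_i a_\tau$ and $\mathcal R_i$ terms using the hypotheses $\LpNorm{2}{\phi}\ll(\ln a^{-1})^{-1}$ and $d\lesssim a\ln a^{-1}$, and verify the remainder bounds \eqref{Si}, \eqref{Rem-i} hold after this normalization. I expect the main obstacle to be the bookkeeping of which error terms are genuinely of the claimed order: in particular, tracking the interplay between the three small quantities $a$, $d=b-1$, and $\LpNorm{2}{\phi}$, and confirming that cross terms such as $a\,d$, $a\LpNorm{2}{\phi}$, and the weight's $a$-dependence never produce a contribution larger than advertised — this requires the sharp normalization asymptotics \eqref{zbc-norms} and careful control of the logarithmically divergent integrals rather than any single hard estimate. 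The delicate cancellation is that $\scalar{\phi}{\zic}=0$ kills what would otherwise be the dominant $O(a)$ term from $\scalar{\Labc\phi}{\zic}$, so one must use the approximate-eigenfunction error \eqref{LZeta}, not the leading eigenvalue, at exactly this point.
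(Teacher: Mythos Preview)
Your proposal is correct and follows essentially the same route as the paper's proof in Appendix~\ref{sec:rel-abc}: expand $\ip{\Fabc}{\zic}$ via the explicit integrals \eqref{de-est1}--\eqref{de-est4}, compute $\ip{\phi}{\p_\tau\zic+(\p_\tau\ln\g_{abc})\zic}$ as a linear combination of $a_\tau,b_\tau,c_\tau$ with $\phi$-dependent coefficients, estimate $\ip{\Labc\phi}{\zic}$ by self-adjointness and \eqref{LZeta} (using $\ip{\phi}{\zic}=0$ exactly as you note), and handle $\ip{\Nab}{\zic}$ by integration by parts. The only point to sharpen is your final ``divide through'' step: in the paper both raw equations \eqref{eq:i0}--\eqref{eq:i1} contain $b_\tau$ \emph{and} $c_\tau$ with leading coefficients of comparable size ($\sim a^{-1}$ for $b_\tau$ in each), so one must solve the coupled $2\times2$ system---eliminating $b_\tau$ from one and $c_\tau$ from the other---to reach the decoupled form \eqref{atauctau-eq}, after which $S_i=f_i/g$ and $\mathcal{R}_i=r_i/g$ with $g\sim a^{-1}\ln(a^{-1})$ the (invertible) determinant-type coefficient.
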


\begin{remark} We see from \eqref{Rem-i} that for the terms $ \mathcal{R}_i(\phi, a, b, c)$ in \eqref{abc-eq1} -  \eqref{abc-eq2} to be subleading we should have   $\LpNorm{2}{\phi}\ll (a\ln(a^{-1})^{-1})^{1/2}$.
\end{remark}
As was mentioned above, this proposition is proven in Appendix \ref{sec:rel-abc}.
 Now, we choose a relation between $a,\ c$ and $d$, so that the leading order term on the right hand side of the equation for $c_\tau$, \eqref{abc-eq1}, vanishes:
\begin{align}\label{ba-relat}
d=\frac{1}{2}a\ln(a^{-1})\,.
\end{align}
%$$\textrm{\textbf{Another choice of the leading term}}$$
\begin{prop} \label{prop:aeqn}
Assume $\LpNorm{2}{\phi}\leq \sqrt{\frac{a}{\ln(a^{-1})}}$ and \eqref{ba-relat}. Then the function $ a(\tau)$ satisfies the differential equation
\begin{align}\label{a-eq}
 a_\tau = -  \frac{2a^2}{\ln(a^{-1})}\lb 1+O\lb\frac{1}{\ln(a^{-1})}\rb\rb,
\end{align}
which gives
\begin{equation}\label{a-asymp}
 a(\tau)=\dfrac{\ln \tau}{2\tau} \lb 1+ O(\frac{1}{\ln^2\tau})\rb.
%\exp\lb -\W\lb \lsb-\frac{1}{a_0}\ln(a_0 e)+2\tau\rsb e^{-1} \rb-1 \rb
\end{equation}
%where $\W(x)$ is the solution of $W e^{W}=x$ for $x\ge-\frac{1}{e}$ (the Lambert W function).
%(\textbf{Write out higher order terms and the remainder terms in} \eqref{eqn:aintau}.)
\end{prop}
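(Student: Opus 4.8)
\textbf{Proof proposal for Proposition~\ref{prop:aeqn}.}

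The plan is to extract a closed equation for $a(\tau)$ alone from the coupled system \eqref{abc-eq1}--\eqref{abc-eq2} by using the relation \eqref{ba-relat} to eliminate $c$ and $d$. First I would substitute $d=\frac12 a\ln(a^{-1})$ into \eqref{abc-eq1}. Differentiating \eqref{ba-relat} in $\tau$ gives $d_\tau=\frac12 a_\tau\ln(a^{-1})-\frac12 a_\tau=\frac12 a_\tau\ln(a^{-1})(1+O(1/\ln(a^{-1})))$. The relation \eqref{ba-relat} was precisely chosen so that the leading term $4(\frac{ac}{2}-\frac{d}{\ln(1/a)})$ on the right of \eqref{abc-eq1} reduces to $4(\frac{ac}{2}-\frac{a}{2})=2a(c-1)$; and since $\frac12\le c\le 1$ is close to $1$ (indeed one expects $c-1=O(a\ln(a^{-1}))$ from the same bootstrap hypotheses that underlie Proposition~\ref{thm:RELATabc}), this term is itself $O(a^2\ln(a^{-1}))$, i.e. subleading relative to the target $a^2/\ln(a^{-1})$ after one more division. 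So \eqref{abc-eq1} becomes $c_\tau+S_0 a_\tau=O(a/\ln(a^{-1}))+\mathcal{R}_0$, which under the standing hypothesis $\norm{\phi}\le\sqrt{a/\ln(a^{-1})}$ and the bounds \eqref{Si}, \eqref{Rem-i} forces $c_\tau$ to be small; this justifies treating $c$ as essentially frozen near $1$ to the order needed.

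The main computation is then \eqref{abc-eq2}. Insert $d=\frac12 a\ln(a^{-1})$ and the expression for $d_\tau$ above into the left-hand side:
\begin{align*}
\tfrac12 a_\tau\ln(a^{-1})\lb 1+O\lb\tfrac{1}{\ln(a^{-1})}\rb\rb + S_1(\phi,a,b,c)\, a_\tau
= -\frac{2da}{\ln(a^{-1})}+O\lb\frac{a^2}{\ln(a^{-1})}+a^2 d\ln(a^{-1})\rb+\mathcal{R}_1.
\end{align*}
On the right, $-\frac{2da}{\ln(a^{-1})}=-\frac{2a\cdot\frac12 a\ln(a^{-1})}{\ln(a^{-1})}=-a^2$; the error $a^2 d\ln(a^{-1})=\frac12 a^3\ln^2(a^{-1})$ is $o(a^2)$, and $\mathcal{R}_1=O(a^2\norm{\phi}/\ln(a^{-1})+a\norm{\phi}^2/\ln(a^{-1}))=O(a^{5/2}/\ln^{3/2}(a^{-1}))$ under the hypothesis on $\norm{\phi}$, hence also $o(a^2)$. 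On the left, $|S_1 a_\tau|\lesssim \norm{\phi}\, a_\tau$ by \eqref{Si} with $i=1$, so $S_1 a_\tau=o(\ln(a^{-1})\,a_\tau)$ provided $\norm{\phi}\ll\ln(a^{-1})$, which holds. Thus the left-hand side is $\frac12 a_\tau\ln(a^{-1})(1+O(1/\ln(a^{-1})))$, and dividing by $\frac12\ln(a^{-1})$ yields
\begin{align*}
a_\tau = -\frac{2a^2}{\ln(a^{-1})}\lb 1+O\lb\frac{1}{\ln(a^{-1})}\rb\rb,
\end{align*}
which is \eqref{a-eq}.

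It remains to integrate \eqref{a-eq} to get the asymptotics \eqref{a-asymp}. I would set $w:=1/a$, so $w_\tau=-a_\tau/a^2=\frac{2}{\ln w}(1+O(1/\ln w))$, i.e. $w_\tau\ln w=2(1+O(1/\ln w))$. Integrating, $\frac{d}{d\tau}(w\ln w - w)=w_\tau\ln w=2+O(1/\ln w)$, so $w\ln w\sim 2\tau$ as $\tau\to\infty$; feeding this back, $\ln w=\ln(2\tau)-\ln\ln w\sim\ln\tau$, hence $w\sim 2\tau/\ln\tau$, i.e. $a=1/w\sim\frac{\ln\tau}{2\tau}$. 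Tracking the next correction through the same substitution gives the stated $1+O(1/\ln^2\tau)$; this is a standard asymptotic-integration argument and I would carry it out by writing $w\ln w - w = 2\tau + C + O(\tau/\ln^2\tau)$ and inverting. The one genuine subtlety — and the step I expect to need the most care — is verifying a priori that $c-1$ and $\norm{\phi}$ remain within the ranges assumed (so that the leading term of \eqref{abc-eq1} really is negligible and the $\mathcal{R}_i$, $S_i$ terms are controlled); strictly this requires the bootstrap/continuity argument for the full coupled system, and here I would simply invoke the standing hypotheses $\norm{\phi}\le\sqrt{a/\ln(a^{-1})}$ and $d\lesssim a\ln(a^{-1})$ together with \eqref{abc-eq1} to conclude $c$ stays near $1$, leaving the closure of the bootstrap to the fluctuation analysis of Section~\ref{sec:fluct}.
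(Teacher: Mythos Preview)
Your proof is correct and follows essentially the same approach as the paper: differentiate the relation \eqref{ba-relat} to express $d_\tau$ in terms of $a_\tau$, substitute into \eqref{abc-eq2}, use the hypothesis on $\norm{\phi}$ together with \eqref{Si}--\eqref{Rem-i} to control $S_1$ and $\mathcal{R}_1$, and divide through by $\tfrac12\ln(a^{-1})$ to obtain \eqref{a-eq}. Your discussion of \eqref{abc-eq1} and the behaviour of $c$ is extra (the paper's proof works directly with \eqref{abc-eq2} and does not need it), and you supply more detail on the asymptotic integration via $w=1/a$ where the paper simply asserts \eqref{a-asymp}; note also a minor slip in quoting \eqref{Rem-i} for $i=1$ (the first term is $a^2\norm{\phi}$, not $a^2\norm{\phi}/\ln(a^{-1})$), though this does not affect your conclusion.
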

\begin{proof} Plugging the relation \eqref{ba-relat} into \eqref{abc-eq2} and remembering that $d=b-1$, we obtain %\eqref{a-eq}.
\begin{align}\label{a-eq1}
 &\frac{1}{2}  \big (\ln(a^{-1})- 1 %+\frac{1}{2}c_\tau a\ln(a^{-1})
 +2 S_{2}(\phi, a, b, c)\big) a_\tau =-\frac{2da}{\ln(\frac{1}{a})}+ O(\frac{a^2}{\ln(\frac{1}{a})})+\mathcal{R}_2(\phi, a, b, c),
\end{align}
We see that to solve this equation for $a_\tau$,  we need $|S_2|\ll \ln (1/a) $, which in view of \eqref{Si} with $i=2$ requires that $\| \phi\| \ll  \ln (1/a)$.
Due to the conditions of the proposition and estimate \eqref{Rem-i} with $i=2$, the high order terms in equation \eqref{a-eq} give a small correction upon integration and the leading part can be integrated exactly yielding \eqref{a-asymp}. \end{proof}

\begin{remark}
The above expression for $a_\tau$ passes a consistency test: $a_\tau <0$ and $|a_\tau| \ll a^2$.  \end{remark}
\begin{prop}
 For $|T-t|\ll 1$, the scaling parameter $\lambda$, with $a=-\lambda\dot{\lambda}$ satisfying \eqref{a-asymp}, %or \eqref{eqn:leading.order.a},
 is asymptotic to
\begin{equation}
 \lambda=k e^{-\frac{1}{4}\ln^2 \tau}
%\exp\lb -\frac{1}{2} \W^2(2e^{-1}\tau) \rb,
\label{BUlambda}
\end{equation}
where $\tau$ is related to $t$ by
\begin{equation}
k(T-t)=  \dfrac{\tau}{\ln \tau}e^{-\frac{1}{2}\ln^2 \tau}.
%t=T-\sqrt{\frac{e^3\pi}{2}}\lsb 1-\erf\lb \frac{1}{\sqrt{2}}\W(2 e^{-1}\tau) \rb\rsb.
\label{BUt}
\end{equation}
\end{prop}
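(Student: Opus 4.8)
The plan is to integrate the defining ODE $a = -\lambda\dot\lambda$ using the asymptotics \eqref{a-asymp} for $a(\tau)$ together with the change-of-variables relation $\tau = \int_0^t \lambda^{-2}(s)\,ds$, i.e. $d\tau/dt = \lambda^{-2}$. First I would rewrite $a = -\lambda\dot\lambda$ in the $\tau$ variable: since $\frac{d}{dt} = \lambda^{-2}\frac{d}{d\tau}$, we get $a = -\lambda \cdot \lambda^{-2}\lambda_\tau = -\lambda^{-1}\lambda_\tau = -(\ln\lambda)_\tau$. Hence $(\ln\lambda)_\tau = -a(\tau)$, and by \eqref{a-asymp},
\begin{equation*}
(\ln\lambda)_\tau = -\frac{\ln\tau}{2\tau}\lb 1+O(\ln^{-2}\tau)\rb.
\end{equation*}
Integrating and using $\int \frac{\ln\tau}{2\tau}\,d\tau = \frac14\ln^2\tau$, we obtain $\ln\lambda = -\frac14\ln^2\tau + (\text{const}) + o(1)$, which gives \eqref{BUlambda} with $k = e^{\text{const}}$; one must check that the error term $\int^\tau \frac{\ln s}{s\ln^2 s}\,ds = O(\ln^{-1}\tau) \to 0$ so it is absorbed into the multiplicative constant $k$ (up to $1+o(1)$, which I would note is implicit in the asymptotic equality in \eqref{BUlambda}).

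Next I would recover the relation between $\tau$ and $t$. From $d\tau/dt = \lambda^{-2}$ we get $dt = \lambda^2\,d\tau = k^2 e^{-\frac12\ln^2\tau}\,d\tau$, so
\begin{equation*}
T - t = \int_t^T ds = k^2\int_\tau^\infty e^{-\frac12\ln^2 s}\,ds,
\end{equation*}
using that $t\uparrow T$ corresponds to $\tau\to\infty$ (and that the integral converges, since $e^{-\frac12\ln^2 s}$ decays faster than any power). The remaining task is the asymptotic evaluation of $\int_\tau^\infty e^{-\frac12\ln^2 s}\,ds$ as $\tau\to\infty$. Substituting $s = e^v$ turns this into $\int_{\ln\tau}^\infty e^{v - \frac12 v^2}\,dv = \int_{\ln\tau}^\infty e^{-\frac12(v-1)^2 + \frac12}\,dv$, a Gaussian-type tail integral. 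By the standard Laplace/Mills-ratio asymptotics, $\int_w^\infty e^{-\frac12(v-1)^2}\,dv \sim \frac{1}{w-1}e^{-\frac12(w-1)^2}$ as $w\to\infty$; with $w = \ln\tau$ this yields $\int_\tau^\infty e^{-\frac12\ln^2 s}\,ds \sim \frac{\sqrt e}{\ln\tau}e^{-\frac12(\ln\tau - 1)^2} = \frac{\sqrt e \cdot e^{-1/2}}{\ln\tau}\,e^{-\frac12\ln^2\tau}e^{\ln\tau} = \frac{\tau}{\ln\tau}e^{-\frac12\ln^2\tau}$, the constant $e^{1/2}e^{-1/2}=1$ working out cleanly. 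Absorbing $k^2$ and relabeling the constant as $k$ gives \eqref{BUt}.

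The main obstacle is book-keeping the constants and error terms consistently across the two integrations so that the single constant $k$ in \eqref{BUlambda} and \eqref{BUt} is coherent — in particular, tracking that the subleading corrections to $a(\tau)$ in \eqref{a-asymp} (of relative size $O(\ln^{-2}\tau)$) really do integrate to $o(1)$ additive corrections in $\ln\lambda$ and to $(1+o(1))$ multiplicative corrections in the $T-t$ relation, rather than contaminating the leading exponent $-\frac14\ln^2\tau$. A secondary technical point is justifying the Mills-ratio asymptotic for the tail integral with enough precision (one integration by parts: $\int_w^\infty e^{-\frac12(v-1)^2}dv = \frac{e^{-\frac12(w-1)^2}}{w-1} - \int_w^\infty \frac{e^{-\frac12(v-1)^2}}{(v-1)^2}dv$, the last term being lower order), but this is routine. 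Since the statement is an asymptotic equivalence "$\lambda \sim \cdots$" rather than an identity, once the leading terms are pinned down and the errors shown to be $o(1)$ in the appropriate sense, the proof is complete.
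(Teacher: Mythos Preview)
Your proposal is correct and follows essentially the same route as the paper: rewrite $a=-\lambda\dot\lambda$ as $(\ln\lambda)_\tau=-a(\tau)$ via $\partial_t=\lambda^{-2}\partial_\tau$, integrate the asymptotic \eqref{a-asymp} to get \eqref{BUlambda}, and then use $dt=\lambda^2\,d\tau$ to obtain \eqref{BUt}. The paper's proof is in fact terser than yours---it simply says ``solving which in the leading order leads to \eqref{BUt}'' without carrying out the tail-integral asymptotics you supply via the Mills-ratio argument---so your added detail on $\int_\tau^\infty e^{-\frac12\ln^2 s}\,ds$ and on the integrability of the error terms is welcome and not a deviation from the intended method.
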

\begin{proof} Using the definition  $a=-\lambda\dot{\lambda}$ and the relation $\partial_t \lambda = \lambda^{-2}\partial_\tau \lambda $ we arrive at  $a=-\lambda^{-1}\partial_\tau \lambda$. Combining this with \eqref{a-asymp} we obtain the equation  $\lambda^{-1}\partial_\tau \lambda=-\dfrac{\ln \tau}{2\tau} \lb 1+ O(\frac{1}{\ln^2\tau})\rb$. Solving this differential equation gives \eqref{BUlambda}. Combining \eqref{BUlambda} with $\partial_t \tau = \lambda^{-2}$ gives a differential equation for $\tau(t)$ solving which in the leading order leads to \eqref{BUt}.
\end{proof}
Solving \eqref{BUt} for $\ln^2 \tau$  and substituting the result into \eqref{BUlambda} gives \eqref{lambdaAs}.

%%%%%%%%%%%%%%%%%%%%%%%%%%%%%%%%%%%%%%%%%%%%%%%%%%%%%%%%%%%%%%%%%%%%%%%%%%%%%%%%%%%%%%%%%%%%%%%%%%%%%%%%%%%%%%%%%%%%%%%%%%%%%%%%%%%%%%%%%%%%%%%%%%%%%%%%%%%%%%%%%%%%%%%%%%%%%%
\section{Lower Bound for the Operator ${\cal L}_{abc}$} \label{sec:Labc-lowerbnd}
In this section we investigate  the linear operator $\Labc$, defined in \eqref{Labc}.
The main result of this section is the following lower bound on the quadratic form
$\langle \phi, {\cal L}_{abc}\phi\rangle,\ \phi\perp \zic$, where, recall, the vectors $\zic$ are defined in \eqref{zbc}.
\begin{thm} \label{thm:Lab-lowerbound}
For $|a| \ll 1,\ |b-1| \ll 1,\  |a_\tau|\ll a^2$, $|b_\tau|\ll %\frac{a}{\ln\frac{1}{a}}
a(1-b)$ and for any $\phi\in H^1([0,\infty), \g_{abc}(y) y^3 dy),$ $ \phi\bot \zic,\ i=0,1$ we have,  for some absolute constant $c>0$,
\begin{equation}\label{cLabc-lowerbound-H1}
\langle \phi, {\cal L}_{abc}\phi\rangle \ge c a \HsNorm{1}{\phi}^2. %\|\phi\|_{H^1}^2%\LpNorm{2}{\phi}^2.
\end{equation}
\end{thm}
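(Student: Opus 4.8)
The plan is to prove the coercivity bound \eqref{cLabc-lowerbound-H1} by first establishing it at the level of the quadratic form on $L^2$ (which amounts to a spectral gap for $\Labc$ on the orthogonal complement of the quasi-eigenvectors $\zic$), and then upgrading the $L^2$-gap to an $H^1$-bound by using the ellipticity of the principal part $-\Lap{4}$ of $\Labc$. Throughout I would work with the unitarily equivalent operator $L_{abc}=\g_{abc}^{1/2}\Labc\g_{abc}^{-1/2}$ of \eqref{Lab-unit} acting on $L^2([0,\infty),y^3\,dy)$, which is a genuine Schrödinger operator $-\Lap{4}+V_{abc}$ with $V_{abc}(y)=\tfrac14 a^2y^2-2ab+\tfrac{2b(2(b-1)+ac)}{c+y^2}-\tfrac{4bc(b+1)}{(c+y^2)^2}$; the orthogonality $\phi\perp\zic$ transforms into orthogonality of $\psi:=\g_{abc}^{1/2}\phi$ to the transported vectors $\g_{abc}^{1/2}\zic$.

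**First**, I would pin down the bottom of the spectrum. At $a=0$ Lemma~\ref{Lemma:SpectrumL0} (together with \eqref{eqn:OpLb}--\eqref{zeromode-bc-eqn}) tells us $L_{bc}=L_{abc}|_{a=0}$ has a simple ground state $\eta_{bc}=(c+y^2)^{-b}$ at eigenvalue $0$ and continuous spectrum $[0,\infty)$, so there is no gap above $0$ when $a=0$. Turning on $a>0$ does two things: the confining term $\tfrac14 a^2y^2$ turns the essential spectrum into a discrete spectrum accumulating at $+\infty$ with spacing $\sim a$ (harmonic-oscillator heuristics — more precisely one compares with the radial oscillator $-\Lap{4}+\tfrac14 a^2 y^2$ whose eigenvalues are $a(2k+\text{const})$), and the constant shift $-2ab$ pushes everything down by $\approx -2a$. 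Proposition~\ref{prop:Labc-sa} already records that the ground state energy is $\approx -2a$ (from \eqref{cLabc-lowerbound}, $\langle\phi,\Labc\phi\rangle\gtrsim -[2a+\cdots]\|\phi\|^2$), with the associated approximate ground state being (the normalization of) $\zeta_{bc1}$; and the analysis around \eqref{LZeta-norm} shows $\zeta_{bc0}$ is an approximate eigenvector for the "almost zero" eigenvalue, which is $O((b-1)+a)=O(a\ln a^{-1})$, hence $o(a)$ in absolute value but $\gtrsim -Ca\ln a^{-1}$... the key point being that these are the \emph{only} two eigenvalues that are not bounded below by a positive multiple of $a$. So the structure to prove is: $\Labc$ restricted to $\{\zic\}^\perp$ has spectrum $\geq c_0 a$. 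This I would get by a Temple/Feshbach-type argument: decompose $\phi=\alpha_0\tilde\zeta_0+\alpha_1\tilde\zeta_1+\phi^\perp$ (tilde = normalized) — but since $\phi\perp\zic$ exactly, the cross terms $\langle\phi,\Labc\tilde\zeta_i\rangle$ are controlled by $\|(\Labc+2a\delta_{i0})\tilde\zeta_i\|\lesssim a$ from \eqref{LZeta-norm} times $\|\phi\|$, so $\langle\phi,\Labc\phi\rangle\geq (\text{third eigenvalue of }\Labc)\|\phi\|^2 - O(a)\|\phi\|\cdot(\text{overlap})$, and the third eigenvalue is $\geq c_0 a$ by the oscillator comparison. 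The $a_\tau,b_\tau$ hypotheses enter because the true orthogonality conditions \eqref{orth-eqns} used in the dynamics differ from the static $\phi\perp\zic$ by terms involving $\partial_\tau\zic$; one must check these corrections are lower order, which is exactly what $|a_\tau|\ll a^2$, $|b_\tau|\ll a(1-b)$ buy.

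**Then**, to pass from the $L^2$-bound $\langle\phi,\Labc\phi\rangle\geq 2c_0 a\|\phi\|^2$ on $\{\zic\}^\perp$ to the $H^1$-bound, I would use the standard interpolation trick: write $\langle\phi,\Labc\phi\rangle = \tfrac12\langle\phi,\Labc\phi\rangle + \tfrac12\langle\phi,\Labc\phi\rangle$ and, in the first half, bound $\langle\phi,\Labc\phi\rangle\geq \|\nabla_y\phi\|^2 - C\|\phi\|^2$ using $\Labc=-\Lap{4}+(\text{bounded potential, uniformly in }a,b,c\text{ near }1,1)$ — the potential in $L_{abc}$ is bounded on compact sets and the $\tfrac14 a^2 y^2$ term only helps — so $\langle\phi,\Labc\phi\rangle\geq \tfrac12\|\nabla_y\phi\|^2 - \tfrac{C}{2}\|\phi\|^2 + \tfrac12\cdot 2c_0 a\|\phi\|^2$, and choosing the weight in the interpolation so that the $-C\|\phi\|^2$ is absorbed: more carefully, use $\langle\phi,\Labc\phi\rangle = \theta\langle\phi,\Labc\phi\rangle + (1-\theta)\langle\phi,\Labc\phi\rangle \geq \theta(\|\nabla\phi\|^2 - C\|\phi\|^2) + (1-\theta)\cdot 2c_0 a\|\phi\|^2$, and the hidden point is that one actually needs the $L^2$ bound with a slightly better constant, or one observes that the genuine spectral bound is $\langle\phi,\Labc\phi\rangle\geq c_0 a\|\phi\|^2$ AND $\langle\phi,\Labc\phi\rangle\geq \|\nabla\phi\|^2-C\|\phi\|^2$, then for small enough $\theta\sim a/C$, $\theta\|\nabla\phi\|^2 + (-\theta C + (1-\theta)c_0 a)\|\phi\|^2 \geq \theta\|\nabla\phi\|^2 + \tfrac12 c_0 a\|\phi\|^2\gtrsim a\|\phi\|_{H^1}^2$ (using $a\|\nabla\phi\|^2\lesssim\|\nabla\phi\|^2$ and $\theta\sim a$). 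That gives \eqref{cLabc-lowerbound-H1}. One must be slightly careful that $\|\cdot\|_{H^1}$ here means the weighted $H^1$ of the statement and that the weight $\g_{abc}$ is comparable across the parameter ranges; since $\g_{abc}^{-1/2}=4y^2e^{ay^2/4}(c+y^2)^{-b}$ with $b,c$ near $1$ and $a$ near $0$, the weights for different admissible $(a,b,c)$ are mutually comparable on any fixed region, and the Gaussian growth $e^{ay^2/4}$ is handled by the $\tfrac14 a^2 y^2$ term in $L_{abc}$.

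**The main obstacle** I expect is the first step — establishing the hard spectral gap of size $\sim a$ on $\{\zic\}^\perp$, i.e. showing the third eigenvalue of $\Labc$ is bounded below by $c_0 a$ and not merely by something like $c_0 a/\ln a^{-1}$. The subtlety is that $\zeta_{bc0}$ and $\zeta_{bc1}$ are only \emph{approximate} eigenvectors (and $\zeta_{bc0}$ is, per the parenthetical remark after \eqref{zbc}, essentially parallel to $\zeta_{bc2}$, so the "two-dimensional" unstable subspace is really spanned by $\zeta_{bc1}$ and $\zeta_{bc0}$ with the latter having large norm $\sim\ln^{1/2}a^{-1}$), so one cannot simply quote a min-max principle on an exact spectral subspace. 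The clean way is: let $P$ be the \emph{exact} spectral projection of $\Labc$ onto its lowest two eigenvalues, show $\|P - \tilde P\|\ll 1$ where $\tilde P$ projects onto $\mathrm{span}\{\tilde\zeta_0,\tilde\zeta_1\}$ (using \eqref{LZeta-norm} and the resolvent/Riesz-projection estimate, since the third eigenvalue — which we are trying to locate — is $\gtrsim a$ away from $-2a$ and $0$ by the a priori oscillator comparison, giving a spectral gap to run the perturbation argument), conclude $\{\zic\}^\perp$ is within $o(1)$ of $\mathrm{Ran}(1-P)$, and on $\mathrm{Ran}(1-P)$ the form is $\geq(\text{3rd eigenvalue})\|\cdot\|^2\geq c_0 a\|\cdot\|^2$; then the $o(1)$ misalignment costs only $o(a)\|\phi\|^2$ since $\Labc$ is bounded below by $-Ca$. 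Getting the constant $c_0$ genuinely independent of $a$ in the oscillator comparison (the third radial-oscillator eigenvalue being $\geq \text{const}\cdot a$ after the $-2ab$ shift) is where one has to be most careful, but it is a clean computation with the explicit potential \eqref{Lab-unit}.
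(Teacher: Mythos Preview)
Your overall architecture matches the paper's: first establish the $L^2$ spectral gap $\langle\phi,\Labc\phi\rangle\geq c_0 a\|\phi\|^2$ on $\{\zic\}^\perp$, then upgrade to $H^1$ by writing $\langle\phi,\Labc\phi\rangle=(1-\delta)\langle\phi,\Labc\phi\rangle+\delta\langle\phi,\Labc\phi\rangle$, using the $L^2$ gap on the first piece and $\Labc\geq -\Lap{4}-C$ on the second, and taking $\delta\sim a$. That second step is essentially identical to the paper's.

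The $L^2$ gap, however, is obtained quite differently. The paper first decomposes $L_{abc}=L_*+W-2ab$ with $W=\tfrac{2abc}{c+y^2}\geq 0$ and $L_*=-\Lap{4}-\tfrac{4bc(b+1)}{(c+y^2)^2}+\tfrac{4b(b-1)}{c+y^2}+\tfrac14 a^2y^2$, and shows $L_*\geq 0$ via an explicit positive zero mode of $L_{0bc}$. It then \emph{imports} the third-eigenvalue estimate $\lambda_3(L_*)=4a+O(a/\ln\tfrac1a)$ from the companion paper \cite{dlos} and uses the minimax principle: with $\eta$ the constrained minimizer of $(L_*\phi,\phi)/\|\phi\|^2$ on $\{\zic\}^\perp$ and $V=\mathrm{span}\{\eta,\phi_{0a},\phi_{1a}\}$ (where $\phi_{ia}$ are an explicit orthonormalization of $\mathrm{span}\{\zic\}$), one has $\lambda_3\leq\max_V(L_*\cdot,\cdot)/\|\cdot\|^2$, and direct computation of the $3\times 3$ matrix yields $\max_V\leq (L_*\eta,\eta)+O(a/\sqrt{\ln\tfrac1a})$. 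Hence $(L_*\eta,\eta)\geq 4a-o(a)$, and adding back $W-2ab$ gives $\langle\phi,\Labc\phi\rangle\geq\tfrac32 a\|\phi\|^2$.

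Your Riesz-projection alternative is reasonable in spirit, but there is a quantitative concern you should be aware of. By \eqref{LZeta-norm} the normalized quasimode errors are $\|(\Labc+2a)\tilde\zeta_{bc0}\|\lesssim (b-1+a)/\sqrt{\ln\tfrac1a}$ and $\|\Labc\tilde\zeta_{bc1}\|\lesssim a$, while the spacings between the lowest three eigenvalues are all of order $a$. The ratio (error)/(gap) is thus $O(1)$ for $\tilde\zeta_{bc1}$, so the standard resolvent bound does not deliver $\|P-\tilde P\|=o(1)$ for free; you would need a finer argument. The paper's minimax route sidesteps this entirely: it never compares $\tilde P$ to $P$, only computes matrix elements on a three-dimensional test space, and exploits that $\|P^\perp L_*\phi_{ia}\|=O(a/\sqrt{\ln\tfrac1a})$ is genuinely $o(a)$. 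Finally, your ``oscillator comparison'' for $\lambda_3$ is not really an argument --- the deep attractive well $-4bc(b+1)/(c+y^2)^2$ defeats a naive comparison with the radial oscillator --- and indeed the paper does not attempt this here but cites \cite{dlos} for the sharp value $\lambda_3\approx 4a$.
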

\begin{proof}
Recall that the operator $\Labc$ is unitarily equivalent to the operator $L_{abc}$,
\begin{align}\label{Lab}
L_{a bc }=\g_{a bc }^{1/2}\Labc\g_{a bc }^{-1/2}\,,
\end{align}
acting on the space  $  L^2([0,\infty), y^3 dy)$ with the inner product  $\ipp{\phi}{\eta}:=\int \phi\eta y^3 dy$. The latter operator has been explicitly computed in the proof of Proposition \ref{prop:Labc-sa} to be
\begin{equation} \label{Lab-deco}
L_{abc}=L_{*}+ W(y)- 2ab\,,
\end{equation}
where
\begin{equation} \label{defn:Lstar}
L_*:=-\Lap{4}_y-\frac{4bc(b+1)}{(c+y^2)^2}+\frac{4b(b-1)}{c+y^2}+\frac{1}{4} a^2 y^2
\end{equation}
and \begin{align*} W(y):= \frac{2bac}{c+y^2}\ge 0\,.\end{align*}
Since the lower bound of $L_{a bc}$ is equal to the lower bound of $\Labc$, we estimate the former lower bound. We observe that, like  $L_{abc}$, the operator $L_*$ is self-adjoint on $L^2([0,\infty),\, y^3 dy)$ and its spectrum is purely discrete, provided $a>0$. The latter property follows from the fact that the potential in \eqref{defn:Lstar} goes to infinity as $y \rightarrow \infty$. Moreover, $L_*\ge 0$. Indeed, write $L_*=L_{0bc}+\frac{1}{4}a^2 y^2$, where
\begin{equation*}
 L_{0bc}:=-\Lap{4}-\frac{4bc(b+1)}{(c+y^2)^2}+\frac{4b(b-1)}{c+y^2}\,.
\end{equation*}
Define $\eta_1(y):=\frac{1}{2\chi(y)}\zeta(y)$, where $\zeta$ is defined in \eqref{zeta}, so that $L_{011}\eta_1 =0$. We compute
\begin{equation} \label{defn:eta1}
\eta_1(y):=\frac{1}{2\chi(y)}\zeta(y)=\frac{1}{1+y^2} .
\end{equation}
An extension of relation \eqref{defn:eta1} leads to the equation
 \begin{equation} \label{eqn:zeromodebeqn}
  L_{0bc} \eta_{bc} =0,
 \end{equation}
 where  $\eta_{bc} $ is a deformation in $b$ and $c$ of  $\eta_1$ given by
\begin{equation} \eta_{bc}:=\frac{1}{(c+y^2)^{b}}.
 \label{eqn:zeromodeb}
\end{equation}
Since $\eta_{bc} >0$ we conclude, as in Lemma \ref{Lemma:SpectrumL0}, that $ L_{0bc} \ge 0$ (with the zero being a resonance of $ L_{bc}$).
This, together with $L_*=L_{0bc}+\frac{1}{4}a^2 y^2, $ implies that $L_*\ge 0$.
%Our goal is to prove the following lower bound on the operator ${\cal L}_*$:
Next, we have
\begin{lemma} \label{lem:lowerboundLstar}
$\mbox{For}\ \phi\in H^2([0,\infty), y^3 dy),\  \phi\bot \zic,\ i=0,1,$ and $a\ll 1$, we have
\begin{equation} \label{lowerboundLstar}
\ipp{L_*\phi}{\phi}\ge \lb %1-\sqrt{\frac{2}{3}}
4 -O(\frac{1}{\sqrt{\ln\frac{1}{a}}}) \rb a  \|\phi\|_*^2,
\end{equation} %$\phi\in X$.
where $\ipp{\phi}{\eta}:=\int \phi\eta y^3 dy$ is the inner product in $L^2([0,\infty), y^3 dy)$ and $\|\cdot\|_*$ is the corresponding norm.
\end{lemma}
\begin{proof}
To this end we will use the minimax principle for self-adjoint operators (see \cite{ReSiIV}), which states that the third eigenvalue, $\lambda_3$, of $L_*$
\begin{equation}\label{min-max}
 \lambda_3=\inf_{{\rm dim}\, V=3}\max_{\phi\in V}\frac{\ipp{L_*\phi}{\phi}}{\|\phi\|_*^2},
\end{equation}
where $V$ is an arbitrary subspace of $H^1(\R^4)$ and the estimate from \cite{dlos}  of $\lambda_3$:
\begin{equation}\label{exp:lambda3}
 \lambda_3=4a+\frac{Ca}{\ln\frac{1}{a}}(1+o(1))\,,
\end{equation}
for some constant $C$. Now, let $\eta$ be the minimizer to $\ip{L_*\phi}{\phi}$ over
\begin{align*} \{\phi\in H^2([0,\infty), y^3 dy)\ |\ \phi\bot\zic,\ i=0, 1,\ \|\phi\|_*=1\}\,.\end{align*}
Since $L_*$ is selfadjoint, $\eta$ can be chosen to be real.  Since the spectrum of $L_*$ is discrete, this minimizer exists.  By the linear independence of $ \zic,\ i=0, 1$ and orthogonality of $\eta$ to $ \zic,\ i=0, 1$, the three vectors $\eta$, $ \zic,\ i=0, 1$ span a three dimensional space.  The minimax principle then asserts that
\begin{equation} \label{est:speclam3}
 \lambda_3\le \max_{\phi\in W}\frac{\ipp{L_*\phi}{\phi}}{\|\phi\|_*^2},
\end{equation}
where $W:={\rm span}\{\eta,\, \zic,\ i=0, 1\}$. Let $\phi_{n a },\ n=0,1,$ be an appropriate orthonormal basis in ${\rm span}\{ \zic,\ i=0, 1\}$:
\begin{equation}
\phi_{na} :=\eta_{bc}\psi_{na},
\end{equation}
with %(cf \eqref{def:psina'})
\begin{equation} \label{def:psina}
\psi_{0 a}:=\sqrt{\frac{2}{\ln\frac{1}{a}}}\lsb 1+\O{\frac{1}{\ln\frac{1}{a}}} \rsb e^{-\frac{a}{4} y^2},\ \psi_{1a}:=\
 \lsb 1+\O{\frac{1}{\ln\frac{1}{a}}} \rsb(\frac{c_1}{\ln^\frac{1}{2}\frac{1}{a}}-c_2 a y^2)e^{-\frac{a}{4} y^2},
\end{equation}
for some positive constants $c_1$ and $c_2$.
 We write $\phi=\gamma_1\eta+\gamma_2\phi_{0a}+\gamma_3\phi_{1a}$ in the inner product $\ipp{L_*\phi}{\phi}$, where
\begin{equation}
|\gamma_1|^2+|\gamma_2|^2+|\gamma_3|^2=1,
\label{cond:gammas}
\end{equation}
and use self-adjointness of $L_*$ to obtain that
\begin{align}
\ipp{L_*\phi}{\phi}=&|\gamma_1|^2\ipp{L_*\eta}{\eta}+|\gamma_2|^2\ipp{L_*\phi_{0a}}{\phi_{0a}}+|\gamma_3|^2\ipp{L_*\phi_{1a}}{\phi_{1a}}\notag\\
& +2 \Re(\gamma_1\gamma_2^*)\ipp{\eta}{L_*\phi_{0a}}+
2 \Re(\gamma_1\gamma_3^*)\ip{\eta}{L_*\phi_{1a}} \notag\\
& +2 \Re(\gamma_2\gamma_3^*)\ipp{L_*\phi_{0a}}{\phi_{1a}} .\label{Lstarxixi}
\end{align}
%%%%%%%%%%%%%%%%%%%%%%%%%%%%%%%%%%%%%%%%%%%%%%%

We compute the various matrix elements on  the right hand side of equation \eqref{Lstarxixi}. Using that %We begin by computing the behaviour of the Laplacian on the function $\phi_{n a}$ resulting in the equation
\begin{equation*}
\Lap{4}_y\phi_{n a}=\psi_{n a}\Lap{4}_y\eta_{bc}+\eta_{bc}\Lap{4}_y
\psi_{n a}+2\lb\p_y \psi_{n a}\rb\lb\p_y\eta_{bc}\rb,
\end{equation*}
%Using this to replace the Laplacian term in ${\cal L}\phi_{n a}$ then gives that
we find
\begin{equation*}
L_*\phi_{n a}=\psi_{n a}L_{0bc}\eta_{bc}+\eta_{bc} H_a\psi_{n
a}-2\lb\p_y\eta_{bc}\rb\lb\p_y \psi_{n a}\rb. %+ a\eta_1^2\psi_{n a}
%(\frac{2a}{b}+\frac{b_\tau}{b^2})\frac{1}{1+by^2}\phi_{n a}
\end{equation*}
Using the facts that $L_{0bc}\eta_{bc}=0$, $H_a\psi_{0 a}=2a\psi_{0 a}$ and $H_a \psi_{1 a}=4a \psi_{1 a}+(8c_2a\sqrt{\frac{\ln\frac{1}{a}}{2}}-\sqrt{2}c_1a)\psi_{0 a}$ %$H_a\psi_{n a}=2a(n+1)\psi_{n a}$
and computing $2\lb\p_y\eta_{bc}\rb\lb\p_y \psi_{n a}\rb$,
%$$2\lb\p_y\eta_{bc}\rb\lb\p_y \psi_{n a}\rb = 2a\phi_{n a}-S_n,$$
we obtain that
\begin{equation*}
\lb L_*-2 a n\rb\phi_{n a}= S_n, %-2\lb\p_y\eta_{bc}\rb\lb\p_y\psi_{n a}\rb.
%+a\eta_1^2\psi_{n a}.
\end{equation*}
with
\begin{equation*}
S_0:= -\lb 2a(b-1)+\frac{2abc}{c+y^2}\rb\phi_{0 a}\,,
\end{equation*}
and
\begin{align*}
S_1:= &-2a\lb (b-1)-\frac{bc}{c+y^2}  \rb\phi_{1 a}\\
&-\sqrt{2}a\lb 4c_2\sqrt{\ln\frac{1}{a}}(b-1) -\frac{4c_2bc}{c+y^2}\sqrt{\ln\frac{1}{a}}+2c_1 \rb\phi_{0 a}\,.
\end{align*}
Using this and the fact that the functions $\phi_{i a}$ are normalized, we estimate
\begin{align} \label{est:InnerLPhi1}
& \ipp{L_*\phi_{0a}}{\phi_{0a}}=-\frac{a}{2}+\O{\frac{a}{\ln\frac{1}{a}}}\,,\\
& \ipp{L_*\phi_{0a}}{\phi_{1a}}=-\frac{c_1a}{\sqrt{2}\ln\frac{1}{a}}+\O{\frac{a}{\ln^\frac{3}{2}\frac{1}{a}}} \\
%\end{align} and \begin{align}
&\ipp{L_*\phi_{1a}}{\phi_{1a}}=\frac{a}{\ln^{\frac{1}{2}}\frac{1}{a}}(2+\frac{c_1}{\sqrt{2}})+\O{\frac{a}{\ln^{2}\frac{1}{a}}}.
\label{est:InnerLPhi2}
\end{align}
%We note that the first two are computed using the leading order expressions for $\phi_{0a}$ and $\phi_{1a}$, however, the third estimate requires higher order terms.

%%%%%%%%%%%%%%%%%%%%%%%%%%%%%%%%%%%%%%%%%%%%%%%%%%%5
Let $P^\bot$ be the orthogonal projection onto the orthogonal complement of the two vectors $\phi_{0a}$ and $\phi_{1a}$.  We compute that
\begin{equation}\label{PpurpLstar2} \|P^\bot L_*\phi_{ia}\|_*=\O{\frac{a}{\ln^\frac{1}{2}\frac{1}{a}}},\ i=0, 1.  %2\sqrt{\frac{2}{3}}\frac{a}{\ln^\frac{1}{2}\frac{1}{a}}\lb 1+\O{\frac{a}{\ln\frac{1}{a}}}\rb
\end{equation}
Using the estimates, \eqref{est:InnerLPhi1} - \eqref{PpurpLstar2}, together with \eqref{est:speclam3}  and  \eqref{Lstarxixi}, we find \begin{align}
\lambda_3\le \max_{\gamma_i} \big\{ |\gamma_1|^2\ipp{L_*\eta}{\eta}-\frac{a}{2}|\gamma_2|^2
+\O{\frac{a}{\ln^\frac{1}{2}\frac{1}{a}}}\big\}. \nonumber %\\
\end{align}
Now, since $L_*\geq 0$, we know that $\ipp{L_*\eta}{\eta}\ge 0$. Then the above relation implies
 %(In passing to the equality we used that the inequality implies that $\ip{{\cal L}_*\eta}{\eta}\ge 0$.)
 \begin{align}
\lambda_3\le \ipp{ L_*\eta}{\eta}
+\O{\frac{a}{\ln^\frac{1}{2}\frac{1}{a}}}.
\end{align}
Using expression \eqref{exp:lambda3} for $\lambda_3$ in the last inequality, we obtain
that \begin{equation*}\ipp{ L_*\eta}{\eta}\ge
4\lb 1 -O(\frac{1}{\sqrt{\ln\frac{1}{a}}}) \rb a ,\end{equation*}
 which gives the inequality \eqref{lowerboundLstar}.
\end{proof}

Because of the decomposition \eqref{Lab-deco} and since $W(y) \ge 0$
and $0<b-1 \lesssim \frac{1}{\sqrt{\ln\frac{1}{a}}}$, we arrive at
\begin{equation}\label{Labc-lowerbound-L2}
( \phi, L_{abc}\phi) \ge \frac{3}{2} a \norm{\phi}^2_*\,,
\end{equation}
or, by the unitary map $\phi\mapsto\gamma_{abc}^{1/2}\phi$,
\begin{align}\label{cLabc-lowerbound-L2}
\ip{\phi}{\mathcal{L}_{abc}\phi} \ge \frac{3}{2} a \norm{\phi}^2\,.
\end{align}
To pass from this bound to \eqref{cLabc-lowerbound-H1}, we decompose $\lan \phi, \cL_{abc}\phi\ran=(1-\delta)\lan \phi, \cL_{abc}\phi\ran+\delta\lan \phi, \cL_{abc}\phi\ran$ and use \eqref{cLabc-lowerbound-L2} for the first term and $\cL_{abc}\ge -\Delta^{(4)} -C$, for some $C>0$, for the second one. Optimizing with respect to $\delta$ produces \eqref{cLabc-lowerbound-H1}.
\end{proof}

%%%%%%%%%%%%%%%%%%%%%%%%%%%%%%%%%%%%%%%%%%%%%%%%%%%%%%%%%%%%%%%%%%%%%%
%%%%%%%%%%%%%%%%%%%%%%%%%%%%%%%%%%%%%%%%%%%%%%%%%%%%%%%%%%%%%%%%%%%%%%%%
%%%%%%%%%%%%%%%%%%%%%%%%%%%%%%%%%%%%%%%%%%%%%%%%%%%%%%%%%%%%%%%%%%%%%%
\section{Analysis of Fluctuations} \label{sec:fluct}

In this section, neglecting the nonlinearity $N(\phi)$, we find a bound $\LpNorm{2}{\phi}\lesssim |1-b|$ on the fluctuation $\phi$. %, which suggests that it is bounded as.
Given that we expect, from \eqref{ba-relat}, that $1-b\sim \frac{a}{2}\ln\frac{1}{a}$, this is sufficient to close the estimates. %As usual, we expect that the nonlinear gives a higher order correction and therefore we omit it from our considerations.
Neglecting the nonlinearity $N(\phi)$ in \eqref{phi-eq}, we arrive at the linear equation
\begin{equation}\label{phi-eq-linear}
 \p_\tau \phi=-\Lab\phi+\Fab.
\end{equation}
More precisely, we have the following proposition,
\begin{prop}\label{prop:xiL2}
Assume $a,\ b$ and $\phi$ solve \eqref{phi-eq-linear} and \eqref{orth-eqns} (with $\Nab=0$),
%Assume $a,\ b$ and $\phi$ solve \eqref{phi-eq} and \eqref{orth-eqns},
which is equivalent to \eqref{orthoconditions-bc}, and are such that $1-b=\O{a\ln\frac{1}{a}}$ and $b_\tau=\O{a}$, and assume \eqref{a-asymp} holds. Then, for $\tau\gg 1$, $\phi$ satisfies the estimate
\begin{equation*}
 \LpNorm{2}{\phi}\lesssim \LpNorm{2}{\phi(0)}\lb\frac{2a}{\ln\frac{1}{a}}\rb^{\ln\frac{1}{a}} + a\ln\frac{1}{a}.
\end{equation*}
\end{prop}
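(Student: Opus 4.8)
The plan is to derive a differential inequality for $\LpNorm{2}{\phi}^2$ and integrate it using the lower bound on $\Labc$ from Theorem \ref{thm:Lab-lowerbound}. First I would compute $\p_\tau \LpNorm{2}{\phi}^2$ using the linearized equation \eqref{phi-eq-linear}. One subtlety is that the norm $\|\cdot\|$ depends on $\tau$ through the weight $\g_{abc}$, so differentiating produces, besides the expected $-2\langle\phi,\Labc\phi\rangle + 2\langle\phi,\Fab\rangle$, an extra term $\langle\phi,(\p_\tau\ln\g_{abc})\phi\rangle$; using $a_\tau=\O{a^2/\ln\frac1a}$, $b_\tau=\O{a}$ (and the relation \eqref{ba-relat} with \eqref{a-asymp}) this extra term is $\O{a}\LpNorm{2}{\phi}^2$ and is absorbed into the main negative term below. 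Thus I obtain, schematically,
\begin{equation*}
\tfrac12\p_\tau \LpNorm{2}{\phi}^2 = -\langle\phi,\Labc\phi\rangle + \langle\phi,\Fab\rangle + \O{a}\LpNorm{2}{\phi}^2.
\end{equation*}

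Next I would invoke the coercivity estimate \eqref{cLabc-lowerbound-H1}: since $\phi\bot\zic$, $i=0,1$ (which holds by \eqref{orthoconditions-bc}, equivalent to the chosen orthogonality equations \eqref{orth-eqns} with $\Nab=0$), we have $\langle\phi,\Labc\phi\rangle \ge ca\HsNorm{1}{\phi}^2 \ge ca\LpNorm{2}{\phi}^2$. The hypotheses $1-b=\O{a\ln\frac1a}$, $b_\tau=\O{a}$ of the proposition are exactly what Theorem \ref{thm:Lab-lowerbound} requires (note $|b_\tau|\ll a(1-b)$ follows from $b_\tau=\O{a}$ and $1-b\sim\frac a2\ln\frac1a\gg a$, and $|a_\tau|\ll a^2$ from \eqref{a-asymp}). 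So, absorbing the $\O{a}$ term (shrinking $c$), and estimating the forcing by Cauchy–Schwarz $\langle\phi,\Fab\rangle \le \LpNorm{2}{\phi}\,\LpNorm{2}{\Fab}$, I get
\begin{equation*}
\p_\tau \LpNorm{2}{\phi} \le -\tfrac{c}{2}a\,\LpNorm{2}{\phi} + \LpNorm{2}{\Fab}.
\end{equation*}
I would then estimate $\LpNorm{2}{\Fab}$ from the explicit formula \eqref{Fabc}: using $b_\tau, c_\tau = \O{a}$ (which come from \eqref{abc-eq1}–\eqref{abc-eq2} and \eqref{a-asymp}) and $b-1=\O{a\ln\frac1a}$, the dominant contribution is the $b_\tau$ term and the $(b-1)/(c+y^2)^2$ term, giving $\LpNorm{2}{\Fab}\lesssim a\ln\frac1a$ after accounting for the weighted $L^2$ norm (the weight contributes only logarithmic factors, already absorbed).

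Finally I would solve the scalar ODE inequality by the integrating factor $\mu(\tau)=\exp(\frac c2\int_{\tau_0}^\tau a(s)\,ds)$. Using \eqref{a-asymp}, $\int^\tau a\,ds \sim \frac14\ln^2\tau$, and one checks that $\mu(\tau)$ behaves like $(\ln\frac1a/a)^{\const}$ up to the relevant order, so the homogeneous part decays like $\LpNorm{2}{\phi(0)}\big(\frac{2a}{\ln\frac1a}\big)^{\ln\frac1a}$ (matching the stated exponent after tracking constants via the $\tau$–$a$ dictionary), while the Duhamel integral $\int_{\tau_0}^\tau \frac{\mu(s)}{\mu(\tau)}\LpNorm{2}{\Fab(s)}\,ds$ is dominated by its endpoint and gives $\lesssim a\ln\frac1a$, since $a\ln\frac1a$ varies slowly compared to the exponential weight. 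Combining the two contributions yields the claimed bound. The main obstacle I anticipate is bookkeeping the exact exponent in the homogeneous decay rate: one must carefully relate $\int a\,d\tau$ to $\ln\frac1a$ via \eqref{a-asymp} (equivalently $a\sim \frac{\ln\tau}{2\tau}$, so $\ln\frac1a \sim \ln\tau$) and verify that $\exp(\frac c2\int a\,d\tau)$ indeed reproduces $(\frac{2a}{\ln\frac1a})^{-\ln\frac1a}$ to leading order; the coercivity constant $c$ from Theorem \ref{thm:Lab-lowerbound} must be tracked (or the statement interpreted with $c$ absorbed), and one should check the forcing estimate is genuinely subleading relative to requiring $\LpNorm{2}{\phi}\lesssim 1-b$ so that the bootstrap assumptions \eqref{phi-assump} and those in Proposition \ref{prop:aeqn} are recovered.
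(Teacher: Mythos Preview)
Your overall Lyapunov strategy matches the paper's, but two steps do not go through as written.

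First, the forcing estimate via Cauchy--Schwarz, $|\langle\phi,\mathcal{F}_{abc}\rangle| \le \|\phi\|\,\|\mathcal{F}_{abc}\|$ with $\|\mathcal{F}_{abc}\|\lesssim a\ln\frac1a$, fails outright: the constant term $-4b_\tau$ in \eqref{Fabc} is not in $L^2(\gamma_{abc}\,y^3\,dy)$, because the measure behaves like $y^{-1}\,dy$ near $y=0$, so $\|1\|=\infty$ and hence $\|\mathcal{F}_{abc}\|=\infty$ whenever $b_\tau\neq 0$. The paper instead exploits the orthogonality $\phi\perp\zeta_{bci}$ \emph{inside} the pairing: from $1-\zeta_{bc1}=\tfrac{c}{c+y^2}$ and $\tfrac{1}{c+y^2}-\zeta_{bc0}=\tfrac{c}{(c+y^2)^2}$ one gets $\langle 1,\phi\rangle=c\langle(c+y^2)^{-1},\phi\rangle=c^2\langle(c+y^2)^{-2},\phi\rangle$, and after substituting these identities into $\langle\mathcal{F}_{abc},\phi\rangle$ all of the $b_\tau$, $c_\tau$ and $a$ contributions cancel exactly, leaving only $32bc(b-1)\big[\langle(c+y^2)^{-2},\phi\rangle-c\langle(c+y^2)^{-3},\phi\rangle\big]$. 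It is this cancellation, not any bound on $\|\mathcal{F}_{abc}\|$, that produces the forcing of size $O(b-1)=O(a\ln\frac1a)$ in the differential inequality.

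Second, the claim that $\langle\phi,(\p_\tau\ln\gamma_{abc})\phi\rangle=O(a)\|\phi\|^2$ is not justified. That expression contains $-\tfrac12 a_\tau\|y\phi\|^2$, which is positive (since $a_\tau<0$) and involves the moment $\|y\phi\|^2$, not $\|\phi\|^2$; the coercivity \eqref{cLabc-lowerbound-H1} you invoke gives no control on $\|y\phi\|$. The paper upgrades the lower bound via an IMS localization (a cutoff $\chi$ supported on $\{ay^2\le 2\kappa\}$) so as to obtain, in addition to $a\|\phi\|^2$, an extra positive term $k_2\,a^2\|y\bar\chi\phi\|^2$. On $\operatorname{supp}\chi$ one has $y^2\le\kappa/a$, so $|a_\tau|\,y^2\lesssim a/\ln\frac1a$ there; on $\operatorname{supp}\bar\chi$ the new $a^2y^2$ term absorbs $|a_\tau|\,\|y\bar\chi\phi\|^2$ because $|a_\tau|\ll a^2$. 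Only after both of these fixes does one arrive at the scalar inequality $\p_\tau\|\phi\|\le -\tfrac{a}{2}\|\phi\|+C\,a\ln\frac1a$, which your final integration step then handles correctly.
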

\begin{proof}
We use a Lyapunov argument with Lyapunov functional $\phi\mapsto\LpNorm{2}{\phi}^2$.  The time derivative of this functional on solutions $\phi$ to \eqref{phi-eq-linear} is
\begin{equation}
 \p_\tau\LpNorm{2}{\phi}^2=-2\ip{\phi}{\mathcal{L}_{abc}\phi}+2\ip{\mathcal{F}_{abc}}{\phi}+\ip{\phi}{(\p_\tau \ln\g_{a b c})\phi} %+\ip{\mathcal{N}}{\phi}.
\label{timederalyap}
\end{equation}
We estimate right hand side of this relation. Let $\chi(y),\ \bar{\chi}(y)\geq 0$ be a smooth partition of unity, $\chi^2+\bar{\chi}^2=1$,  s.t. $\chi(y)$ is a cutoff function that equals 1 on the set $\{ay^2\leq\kappa\}$, for some convenient large constant $\kappa>0$, and is supported on $\{ay^2\leq 2\kappa\}$.  We have
\begin{proposition}For any $\phi\in H^1([0,\infty),\gamma_{abc}(y)y^3\d y), \phi\perp\zeta_{bci}, i=0,1$ we have, for some absolute constants $k_1,k_2>0$,
  \begin{align}\label{eq:low1}
    2\scalar{\phi}{\Labc\phi}\geq a\norm{\phi}_{L^2}^2+k_1a\norm{\phi}^2_{H^1}+k_2\scalar{\bar{\chi}\phi}{a^2y^2\bar{\chi}\phi}\,.
  \end{align}
\end{proposition}
\begin{proof}Since $\phi$ is orthogonal to the vectors $\zeta_{bci}, i=0,1$, and since $a$ and $b$ satisfy the conditions of Theorem \ref{thm:Lab-lowerbound} we have estimates \eqref{cLabc-lowerbound-H1} and \eqref{cLabc-lowerbound-L2}, which imply
  \begin{align}\label{eq:low2}
    2\scalar{\phi}{\Labc\phi}\geq \frac{3}{2}a\norm{\phi}^2_{L^2}+ka\norm{\phi}_{H^1}^2\,.
  \end{align}
Next, we estimate $\scalar{\phi}{\Labc\phi}$ in a different way. For the partition of unity defined after \eqref{timederalyap}, we have the IMS formula (see e.g. \cite{CFKS})
 \begin{align}\label{ims-form}\Labc=\chi\Labc\chi+\bar{\chi}\Labc\bar{\chi}-|\nabla\chi|^2-|\nabla\bar{\chi}|^2. \end{align}
   By \eqref{cLabc-lowerbound} we have $\chi\Labc\chi\gtrsim -a\chi^2$. Using the inner product $(\xi,\eta):=\int\xi\eta\d y$ and the notation $\bar{\phi}:=\bar{\chi}\phi$ we obtain
\begin{align*}
  \scalar{\bar{\phi}}{\Labc\bar{\phi}}=(\bar{\phi},\gamma_{abc}^{1/2}L_{abc}\gamma_{abc}^{1/2}\bar{\phi})\,,
\end{align*}
which, together with \eqref{Lab-unit} gives, for $\kappa$ large enough,
\begin{align}
\scalar{\bar{\phi}}{\Labc\bar{\phi}}&\geq (\bar{\phi},\gamma_{abc}^{1/2}[\frac{1}{4}a^2y^2-2ab-\frac{4bc(b+1)}{(c+y^2)^2}]\gamma_{abc}^{1/2}\bar{\phi})\\
&\geq \scalar{\bar{\phi}}{(\frac{1}{8}a^2y^2+\frac{1}{9}\kappa a)\bar{\phi}}\,.
\end{align}
Next, using that $|\nabla\chi|$ and $|\nabla\bar{\chi}|$ are of the form $\sqrt{\frac{a}{\kappa}}\tilde{\chi}$, where $\tilde{\chi}$ is supported between $ay^2=\kappa$ and $ay^2=2\kappa$, we compute $|\nabla\chi|^2+|\nabla\bar{\chi}|^2\simeq \frac{a}{\kappa}\tilde{\chi}$, which leads to
\begin{align}
\scalar{\phi}{(|\nabla\chi|^2+|\nabla\bar{\chi}|^2)\phi}\lesssim \frac{a}{\kappa}\norm{\tilde{\chi}\phi}^2_{L^2}\,.
\end{align}
Using the IMS formula \eqref{ims-form} and the estimates above we find
\begin{align}\label{eq:low3}
\scalar{\phi}{\Labc\phi}\geq -c a\norm{\chi\phi}_{L^2}^2+\frac{1}{9}\scalar{\bar{\chi}\phi}{(a^2y^2+\kappa a)\bar{\chi}\phi}-C \frac{a}{\kappa}\norm{\phi}^2_{L^2}\,,
\end{align}
for positive constants $c,\ C $. Now, write $\scalar{\phi}{\Labc\phi}=(1-\delta)\scalar{\phi}{\Labc\phi}+\delta\scalar{\phi}{\Labc\phi}$, and use \eqref{eq:low2} for the first term on the right hand side and \eqref{eq:low3} for the second one, and choose $\delta$ sufficiently small to arrive at \eqref{eq:low1}.
\end{proof}

We substitute expression \eqref{Fabc} for $\mathcal{F}_{abc}$ and observe that the orthogonality of $\phi$ to $\zic$, $i=0,1$, implies
\begin{align*}
 \ip{1}{\phi}= c \ip{\frac{1}{c+y^2}}{\phi}=c^2 \ip{\frac{1}{(c+y^2)^2}}{\phi}\,.
\end{align*}
to obtain
\begin{align} \label{Fxi}
 \ip{\mathcal{F}_{abc}}{\phi}=& 32bc(b-1)[\ip{\frac{1}{(c+y^2)^2}}{\phi}-\ip{\frac{1}{(c+y^2)^3}}{\phi}]\,.\nonumber
\end{align}
Using H\"older's inequality
in the above equality implies
\begin{equation}\label{Fxi-nbd}
 \ip{\mathcal{F}_{abc}}{\phi}=\O{(b-1)\ \LpNorm{2}{\lan y\ran^{4-\eps}\phi}}.
\end{equation}

Next, we estimate $\ip{\phi}{(\p_\tau \ln\g_{a b c})\phi}$. Using that $ \p_\tau \ln\g_{a b c}=-a_\tau y^2/2+2b_\tau\ln(c+y^2)+2bc_\tau/(c+y^2)$, we obtain
\begin{equation}\label{dloggxi-nbd''}
\ip{\phi}{(\p_\tau \ln\g_{a b c})\phi}= - \frac{1}{2} a_\tau\|y\phi\|_{L^2}^2+2b_\tau\|(\ln(c+y^2))^{1/2}\phi\|_{L^2}^2+2b c_\tau\norm{\frac{1}{\sqrt{c+y}}\phi}_{L^2}^2\,.
\end{equation}
By  \eqref{abc-eq2}
 and  \eqref{a-eq},  we have  $a_\tau <0$, $b_\tau <0$, and assuming $c<1$,   we have  by \eqref{abc-eq1}  and \eqref{ba-relat}, that  $c_\tau <0$. Hence
\begin{equation}\label{dloggxi-nbd'}
\ip{\phi}{(\p_\tau \ln\g_{a b c})\phi}\le - \frac{1}{2} a_\tau\|y\phi\|_{L^2}^2\,.
\end{equation}
Now, $y^2\leq \kappa/a$ on $\mathrm{supp }\chi$, which implies $\scalar{\chi\phi}{(\partial_\tau\ln\gamma_{abc})\chi\phi}\leq -\frac{a_\tau\kappa}{2a}\norm{\chi\phi}^2_{L^2}$. This, together with the relation $\scalar{\chi\phi}{(\partial_\tau\ln\gamma_{abc})\chi\phi}=\scalar{\chi\phi}{(\partial_\tau\ln\gamma_{abc})\chi\phi}+\scalar{\bar{\chi}\phi}{(\partial_\tau\ln\gamma_{abc})\bar{\chi}\phi}$, gives
\begin{align}\label{eq:8}
\scalar{\phi}{(\partial_\tau\ln\gamma_{abc})\phi}\leq -\frac{a_\tau\kappa}{2a}\norm{\chi\phi}^2_{L^2}-\frac{1}{2}a_\tau\norm{y\bar{\chi}\phi}^2_{L^2}\,.
\end{align}
Using the last estimate, together with \eqref{timederalyap}, \eqref{eq:low1}, \eqref{Fxi-nbd}, \eqref{eq:8}, \eqref{ba-relat} and \eqref{a-asymp}, we obtain, for some absolute constants $k_1,k_2,C>0$,
\begin{align*}
\partial_\tau\norm{\phi}_{L^2}^2\leq -a\norm{\phi}_{L^2}^2-k_1a\norm{\phi}_{H^1}^2-k_2\scalar{\bar{\chi}\phi}{(a^2y^2+\kappa a)\bar{\chi}\phi}+C(a\ln\frac{1}{a})\norm{\avg{y}^{-\frac{7}{4}+\eps}\phi}_{L^2}^2\,.
\end{align*}
Using $\p_\tau\LpNorm{2}{\phi}^2 =2 \LpNorm{2}{\phi}\p_\tau\LpNorm{2}{\phi}$, dropping the second and third terms (these terms can be used to control the nonlinearity) and dividing the resulting inequality by $\LpNorm{2}{\phi}$, we obtain
\begin{align}\label{diff-ineq}
 \p_\tau\LpNorm{2}{\phi}  \le -\frac{a}{2}\LpNorm{2}{\phi} %-  c a \HsNorm{1}{\phi}^2
 +Ca\ln\frac{1}{a}.
\end{align}
Now, integrating the last inequality gives that
\begin{equation} \LpNorm{2}{\phi} \lesssim e^{-\int_0^\tau a(s)\, ds}\LpNorm{2}{\phi(0)}
+\int_0^\tau e^{-\int_\sigma^\tau a(s)\, ds} \lb a\ln\frac{1}{a}\rb(\sigma)\, d\sigma.
\label{boundxiduhamel}
\end{equation}

We have computed that in the sense of asymptotic equivalence,
%\begin{equation*}
$ a(\tau)\sim\frac{\ln\tau}{2\tau}, $ %\end{equation*}
 as $\tau\to\infty$ (see equation \eqref{a-asymp}). Consequently, as $\sigma\to\infty$, we compute that
\begin{equation*}
 \int_\sigma^\tau a(s)\, ds\sim\frac{1}{4}\ln (\tau\sigma)\ln\lb\frac{\tau}{\sigma}\rb
 =\ln\lb\frac{\tau}{\sigma}\rb^{\frac{1}{4}\ln(\tau\sigma)}.
\end{equation*}
Using
\begin{align*}
\e^{-\int_0^\tau a(s)}=\e^{-\int_0^{\sqrt{\tau}} a(s)}\e^{-\int_{\sqrt{\tau}}^\tau a(s)}\,,
\end{align*}
noting that the first term on the right hand side is uniformly bounded, and the second term is $\sim \tau^{-\frac{3}{16} \ln\tau}$, we obtain $\e^{-\int_0^\tau a(s)}=O(\tau^{-\frac{3}{16}\ln \tau})$. Using now
\begin{align*}
\tau=\frac{\ln\frac{1}{a}}{2a}\lb 1-O(\frac{1}{\ln\frac{1}{a}})\rb
\end{align*}
we obtain that the term involving the initial condition in \eqref{boundxiduhamel} is bounded as
\begin{equation}
e^{-\int_0^t a(s)\, ds}\LpNorm{2}{\phi(0)}\lesssim \LpNorm{2}{\phi(0)}\lb\frac{2a}{\ln\frac{1}{a}}\rb^{\ln\frac{1}{a}}\,.
\label{integralxi0bound}
\end{equation}
To bound the integral term in \eqref{boundxiduhamel} we begin by splitting the domain of integration into $[0,\alpha\tau]$ and $[\alpha\tau,\tau]$ for some $0<\alpha<1$ to be chosen later:
\begin{equation*}
 \int_0^{\alpha\tau} e^{-\int_\sigma^\tau a(s)\, ds}\lb a\ln\frac{1}{a}\rb(\sigma)\, d\sigma+\int_{\alpha\tau}^\tau e^{-\int_\sigma^\tau a(s)\, ds}\lb a\ln\frac{1}{a}\rb(\sigma)\, d\sigma.
\end{equation*}
Since $(a\ln\frac{1}{a})(\sigma)$ is decreasing and $\sigma\mapsto e^{-\int_\sigma^\tau a(s)\, ds}$ is increasing and both are positive, we can bound these terms from above by
\begin{equation*}
 e^{-\int_{\alpha\tau}^\tau a(s)\, ds}\int_0^{\alpha\tau} \lb a\ln\frac{1}{a}\rb(\sigma)\, \d\sigma+\lb a\ln\frac{1}{a}\rb(\alpha\tau)\int_{\alpha\tau}^\tau e^{-\int_\sigma^\tau a(s)\, ds}\d\sigma\,.
\end{equation*}
Since $e^{-\int_{\alpha\tau}^\tau a(s)\, ds}\sim C_\alpha\tau^{\frac{1}{2}\ln\alpha}$, the first term is bounded from above by $C_\alpha\lb a\ln\frac{1}{a}\rb(0)\tau^{1+\frac{1}{2}\ln\alpha}$. Taking $\alpha$ such that $\ln(\alpha)/2<-2$ the first term is $\lesssim \tau^{-1}$, and the second is bounded by $\lb a\ln\frac{1}{a}\rb(\alpha\tau)\sim C\ln^2\tau/\tau$. So we find
\begin{equation}
\int_0^\tau e^{-\int_\sigma^\tau a(s)\, ds}(a\ln\frac{1}{a})(\sigma)\, d\sigma\lesssim \lb a\ln\frac{1}{a}\rb(\tau).
\label{integralxibound}
\end{equation}
Using bounds \eqref{integralxi0bound} and \eqref{integralxibound} in \eqref{boundxiduhamel} %\textbf{and  the fact that, since $\ip{\Lab\phi}{\phi}$ is positive, the corresponding integral can be dropped from the left hand side of \eqref{boundxiduhamel}??}
 completes the proof.
  %We are left with
 \end{proof}

%%%%%%%%%%%%%%%%%%%%%%%%%%%%%%%%%%%%%%%%%%%%%%%%%%%%%%%%%%%%%%%%%%%%%%%%%%
%%%%%%%%%%%%%%%%%%%%%%%%%%%%%%%%%%%%%%%%%%%%%%%%%%%%%%%%%%%%%%%%%%%%%%%%%%
\appendix
\section{Complete Set of Static Solutions for the Radial rKS}
\label{sec:CompleteSetStaticSolns} The static solutions of equation
\eqref{m-eq} satisfy the second order differential equation
\begin{equation}
\p_r^2 \chi+\frac{1}{r}(\chi-1)\p_r\chi=0 \label{eqn:mStatic}
\end{equation}
and hence form a two dimensional manifold. We prove
\begin{prop}
Equation \eqref{eqn:mStatic} has the one-parameter family of
static solutions
\begin{equation*}
\chi^{(\mu)}(r):=\frac{r^{\mu-2}
\mu+ 4-\mu}{1+r^{\mu-2}},\  \mu\in[2,\infty),
%\lambda\in(0,\infty)\ \mbox{and}\.
\end{equation*}
(and therefore  the two-parameter family $\chi^{(\mu)}_{\lambda}(r):=\chi^{(\mu)}(r/\lam)$ as well). The mass at infinity of $\chi^{(\mu)}_{\lambda}$ is $\mu$.
\end{prop}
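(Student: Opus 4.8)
The plan is to reduce \eqref{eqn:mStatic} to an autonomous first-order equation, integrate it explicitly, and then recognize the output as the claimed family. First I would pass to the logarithmic variable $t=\ln r$. Since $\p_r=r^{-1}\p_t$ and $\p_r^2=r^{-2}(\p_t^2-\p_t)$, multiplying \eqref{eqn:mStatic} by $r^2$ turns it into the autonomous equation
\begin{equation*}
\chi_{tt}+(\chi-2)\chi_t=0 .
\end{equation*}
Writing $w:=\chi_t=r\p_r\chi$ and treating $w$ as a function of $\chi$ on any interval where $w\neq 0$ (which is where $\chi$ is strictly monotone in $r$), this becomes $\tfrac{dw}{d\chi}=2-\chi$, hence $w=K-\tfrac12(\chi-2)^2$ for a constant $K$.

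Next I would integrate the separable equation $\chi_t=K-\tfrac12(\chi-2)^2$. For $K>0$, writing $K=\tfrac12\alpha^2$ with $\alpha>0$ and integrating $\frac{d\chi}{\alpha^2-(\chi-2)^2}=\tfrac12\,dt$ gives $\frac{\alpha+(\chi-2)}{\alpha-(\chi-2)}=A\,e^{\alpha t}=Ar^\alpha$ for a positive constant $A$; solving for $\chi$ yields
\begin{equation*}
\chi(r)=2+\alpha\,\frac{Ar^\alpha-1}{Ar^\alpha+1}.
\end{equation*}
Absorbing $A$ into the rescaling $r\mapsto r/\lambda$ (permitted by the scaling invariance of \eqref{eqn:mStatic}) and putting $\mu:=\alpha+2$, an elementary rearrangement identifies this with $\chi^{(\mu)}_\lambda$, $\mu\in(2,\infty)$, since $2+\alpha=\mu$ and $2-\alpha=4-\mu$. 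The other cases are degenerate: $K=0$ integrates to $\chi=2+2/\ln(r/\lambda)$ (the singular $\mu\downarrow 2$ limit), $w\equiv0$ gives the constant solutions, among which $\chi\equiv2=\chi^{(2)}$, and $K<0$ produces solutions that leave every bounded interval and are discarded. An alternative, entirely routine, route to the existence half of the statement is direct substitution: with $s:=r^{\mu-2}$ one has $\chi^{(\mu)}=\mu-\frac{2(\mu-2)}{1+s}$ and $\p_r\chi^{(\mu)}=\frac{2(\mu-2)^2s}{r(1+s)^2}$, and a short computation shows that $\p_r^2\chi^{(\mu)}$ and $\tfrac1r(\chi^{(\mu)}-1)\p_r\chi^{(\mu)}$ equal $\pm\frac{2(\mu-2)^2 s}{r^2(1+s)^3}\bigl[(\mu-3)-(\mu-1)s\bigr]$ respectively, hence cancel.

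For the mass at infinity: since $\p_r\chi^{(\mu)}_\lambda$ decays at infinity, $\chi^{(\mu)}_\lambda$ has a finite limit there, and from $\chi^{(\mu)}=\mu-\frac{2(\mu-2)}{1+r^{\mu-2}}$ we get $\lim_{r\ra\infty}\chi^{(\mu)}_\lambda(r)=\lim_{r\ra\infty}\chi^{(\mu)}(r)=\mu$ for $\mu>2$ (and $\chi^{(2)}\equiv2$); recalling the normalization $m=\frac{1}{2\pi}\int_{|x|\le r}\rho$, the associated total mass of $\rho$ is $2\pi\mu$.

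The brute-force verification is the routine part; the genuine content is the reduction-and-integration, and the point requiring care is the bookkeeping of which solutions of \eqref{eqn:mStatic} are admissible. The full solution set of the second-order ODE is strictly larger than the stated family — it contains all constant functions, the logarithmic solution above, and the $\mu\notin(2,4]$ branches, which are unphysical (they violate $\chi(0)\ge0$) — so ``complete set'' should be read as: these are all solutions that are bounded, monotone and nonnegative with $\chi(0)\ge 0$. The main obstacle is thus to state and justify exactly that qualification, equivalently to track the sign and admissible range of $K$ and of $w=r\p_r\chi$ over the relevant $r$-interval and to match the limiting values of $\chi$ at $r=0,\infty$ with the endpoints $4-\mu$ and $\mu$.
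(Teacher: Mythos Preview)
Your proof is correct and follows essentially the same route as the paper: reduce the autonomous second-order ODE by treating the derivative as a function of $\chi$, integrate once to a separable first-order equation, then integrate again. The only cosmetic difference is that the paper works with $\psi(\chi)=r\p_r\chi/\chi$ rather than your $w(\chi)=r\p_r\chi$, which leads to the same relation $w=2\chi-\tfrac12\chi^2+\const$ after one integration; your treatment of the degenerate cases and the admissibility discussion is in fact more thorough than the paper's.
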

\begin{remark}
If $\mu < 4$, then the mass at the origin is non-zero, i.e. blowup
has already occurred.  If $\mu>4$, then the mass at the origin is
negative and hence the static solution is not physical.
\end{remark}

\begin{proof}
We use the transformation
\begin{equation*}
\psi(\chi)=r\frac{\p_r\chi}{\chi}
\end{equation*}
in \eqref{eqn:mStatic} under the assumption that the right hand side
is indeed a function of $\chi$ alone.  Using this transformation,
equation \eqref{eqn:mStatic} becomes
\begin{equation*}
\chi\p_\chi\psi+\psi=2 -\chi.
\end{equation*}
Integrating this equation gives that
\begin{equation*}
\psi=2-\frac{1}{2}\chi+\frac{\mu}{2}\frac{1}{\chi}
\end{equation*}
and hence, upon substituting this into the definition of $\psi$ and
integrating over $r$, we obtain the general solution
\begin{equation*}
\chi=\frac{\lb\frac{r}{\lambda}\rb^{\sqrt{4+\nu}} r_+ +
r_-}{1+\lb\frac{r}{\lambda}\rb^{\sqrt{4+\nu}}},
\end{equation*}
where $r_\pm=2\pm\sqrt{4+\nu}$ are the roots of
$\chi^2-4\chi-\nu=0$. The total mass at infinity of these solutions
is $r_+$ and hence it is natural to define a new parameter
$\mu=r_+\in[2,\infty)$.  The static solution in terms of the
parameters $\lambda$ and $\nu$ are
\begin{equation*}
\chi=\frac{\lb\frac{r}{\lambda}\rb^{\mu-2}\mu +
4-\mu}{1+\lb\frac{r}{\lambda}\rb^{\mu-2}}
\end{equation*}
The constant $\lambda$ is positive since it is the exponential of
the constant obtained in the last integration.
\end{proof}
The tangent space of the manifold $M_{\lambda,\mu}$ is spanned by
the functions
\begin{equation*}
\zeta^0_{\lambda,\mu}:=\p_\lambda\chi_{\lambda,\mu}=-\frac{2(\mu-2)^2}{\lambda}\frac{y^{\mu-2}}{(1+y^{\mu-2})^2}
\end{equation*}
and
\begin{equation*}
\zeta^1_{\lambda,\mu}:=\p_\mu\chi_{\lambda,\mu}=\frac{y^{2(\mu-2)}-1-2
y^{\mu-2}\ln y}{(1+y^{\mu-2})^2},
\end{equation*}
where $y=\frac{r}{\lambda}$.

We again restrict to the situation of $\kappa=1$ and $n=4$.  After
the gauge transform the zero modes $\zeta^0_{\lambda,\mu}$ and
$\zeta^1_{\lambda,\mu}$ transform to
\begin{equation*}
\eta^\lambda_{\lambda,\mu}:=\frac{1}{\lambda^2}\frac{1+y^2}{y^2}\zeta^0_{\lambda,\mu}\
\mbox{and}\
\eta^\mu_{\lambda,\mu}:=\frac{1}{\lambda^2}\frac{1+y^2}{y^2}\zeta^1_{\lambda,\mu},
\end{equation*}
neither of which are in $L^2(r^3\, dr)$ and hence are generalized
eigenfunctions of ${\cal L}$ (without the $\dot{\lambda}$ term. By
the ODE theory the above functions are the only linearly independent
solutions to the equation ${\mathcal L}_0\phi=0$.  The
Perron-Frobenius theory shows that 0 is the lowest point of the
spectrum of ${\cal L}$.

%%%%%%%%%%%%%%%%%%%%%%%%%%%%%%%%%%%%%%%%%%%%%%%%%%%%%%%%%%%%
%%%%%%%%%%%%%%%%%%%%%%%%%%%%%%%%%%%%%%%%%%%%%%%%%%%%%%%%%%%%
\section{Proof of Proposition \ref{prop:orthog-deco}}
\label{sec:orthog-deco-pf}
Both existence and uniqueness follow from a standard implicit function theorem argument. Fix $0<\delta\ll 1$  and let $Z:=e^{\frac{\delta}{3} y^2}L^\infty([0, \infty))$. Recall that $b=1+1/2 a \log(1/a)$, and define the vector-valued function
\begin{align}
 G(f,a,c)&:=\lb\ip{f-\chi_{bc}}{\zeta_{bci}},\ i=0,1\rb\\
&=\lb\frac{1}{16}\int_0^\infty \lb f(y)-\chi_{bc}(y) \rb (c+y^2)^{2b-2+i}y e^{-\frac{a}{2} y^2}\, \d y,\ i=0,1\rb\,.
\end{align}
This function maps $Z\times\R_+\times\R_+$ into $\R^2$.  It is a $C^1$ function and $G(\chi_{bc},a,c)=0$. Moreover, the derivative of $G$ with respect to $(a,c)$ at $f=\chi_{bc}$ is
\begin{equation}
A:= \lb
\begin{array}{cc}
 \Gamma_{0a} & \Gamma_{0c}\\
\Gamma_{1a} & \Gamma_{1c}
\end{array}
\rb,
\end{equation}
where
\begin{align}
 \Gamma_{ia}&:=\frac{\partial_ab}{4}\int_0^\infty y^3(c+y^2)^{2b-3+i}\e^{-\frac{a}{2}y^2}\d y\\
 \Gamma_{ic}&:=-\frac{b}{4}\int_0^\infty y^3(c+y^2)^{2b-4+i}\e^{-\frac{a}{2}y^2}\d y\,.
\end{align}
Compute that the determinant of $A$ satisfies
\begin{align*}
|\det A|=\frac{1}{64a^2}(\ln^2\frac{1}{a}+O(1)),
\end{align*}
as $a\to 0$, and so $|\det A|\geq C>0$, for some constant $C$, for $(a,c)\in (0,\delta)\times (1,2)$ for $\delta$ small enough. Thus by the implicit function theorem, for any $a_*\in(0,\delta)$ and $c_* \in(1,2)$ there exist open sets $U_{a_*c_*}\subset Z$ and $V_{a_*c_*} \subset (0,\delta)\times(1,2)$ containing $\chi_{b_*c_*}$ and $(a_*,c_*)$, respectively, and a unique function $g_{a_*c_*}:U_{a_*c_*}\rightarrow V_{a_*c_*}$. To determine the size of the neighbourhoods $U_{a_*c_*}$ we look more closely into a proof of the implicit function theorem. Write $\mu=(a,c)$ and expand
\begin{align}\label{eq:expand}
G(f,\mu)=G(f,\mu_*)+\partial_\mu G(f,\mu_*)(\mu-\mu_*)+R_f(\mu)\,,
\end{align}
where $R_f(\mu)=O(|\mu-\mu_*|^2)$ uniformly in $f\in B_C(\chi_{b_*c_*})$ and $(a_*,c_*)\in(\delta/2,\delta)\times (1,2)$, for any fixed constant $C$. By continuity and the above computations, there is $\eps>0$ such that $\det \partial_\mu G(f,\mu_*)$ is bounded away from zero uniformly for $f\in B_{\eps}(\chi_{b_*c_*})$ and $(a_*,c_*)\in(\delta/2,\delta)\times (1,2)$. From \eqref{eq:expand} we find a fixed-point equation for $\mu-\mu_*$,
\begin{align*}
  \mu-\mu_*=\Phi_f(\mu-\mu_*)\,,
\end{align*}
where
\begin{align*}
\Phi_f(\mu)=-(\partial_\mu G(f,\mu_*))^{-1}(G(f,\mu_*)+R_f(\mu))  \,.
\end{align*}
The above observations imply that there is an $\eps_1>0$ such that $\Phi_f$ is a contraction on $B_{\eps_1}(\mu_*)$ for any $f\in B_{\eps}(\chi_{b_*c_*})=:U_{a_*c_*}$. Taking the union of $U_{ac}$ over $a\in(\delta,1)$ and $c\in(1/2,1)$ gives the open set $\mathcal{U}_\eps$.  Patching together the functions $g_{ac}$ gives $g$. $\qed$
%\end{proof}

%%%%%%%%%%%%%%%%%%%%%%%%%%%%%%%%%%%%%%%%%%%%%%%%%%%%%%%%%%%%%%%%%%%%%%
%%%%%%%%%%%%%%%%%%%%%%%%%%%%%%%%%%%%%%%%%%%%%%%%%%%%%%%%%%%%%%%%%%%%%%
\section{Gradient Formulation} \label{sec:grad}

%%%%%%%%%%%%%%%%%%%%%%%%%%%%%%%%%%%%%%%%%%%%%%%%%%%%%%%%%%%%%%%%%%%%%
%%%%%%%%%%%%%%%%%%%%%%%%%%%%%%%%%%%%%%%%%%%%%%%%%%%%%%%%%%%%%%%%%%%%%

The Keller-Segel models \eqref{KS} and \eqref{KS3} are gradient systems.  We begin by formulating a normalized version of \eqref{KS},
\begin{equation}
\begin{split}
\p_t\rho&=\Delta\rho-\nabla\cdot(f(\rho)\nabla c)\\
\varepsilon\p_t c & =\Delta c+\rho-\g c,
\end{split}
\label{eqn:KS1norm}
\end{equation}
as a gradient system.  This system is obtained from \eqref{KS} by setting unimportant constants to 1.

Define the energy (or Lyapunov) functional
\begin{equation}
\Ef(\rho,c):=\int_\Omega \frac{1}{2}|\nabla c|^2-\rho c+\frac{\g}{2}
c^2+G(\rho)\, dx,
\label{eqn:Ef}
\end{equation}
where $G(\rho):=\int^\rho g(s)\, ds$ and $g(\rho):= \int^\rho
\frac{1}{f(s)}\, ds$. The $L^2$-gradient of $\Ef(\rho,c)$ is
\begin{equation*}
\grad_{L^2}\Ef(\rho,c)=\lb\begin{array}{c}-c+g(\rho)\, \\-\Delta
c-\rho+\g c\end{array}\rb,
\end{equation*}
and hence, if we define $U=(\rho,c)$, then \eqref{eqn:KS1norm} can be written in the form $\p_t U=I\Ef'(U)$, where
\begin{equation*}
I=\lb\begin{array}{cc}\nabla\cdot f(\rho)\nabla & 0\\0
&-\frac{1}{\varepsilon}\end{array}\rb.
\end{equation*}
The operator $I$ is non-positive and may be degenerate, however, assuming it is invertible, the operator $I$ defines the metric $\ip{v}{w}_I:=-\ip{v}{I^{-1}w}_{L^2\oplus L^2}$.  In this metric, $\mathrm{grad}\,\Er(U)=-I \Er'(U)$ and hence
\begin{equation*}
\p_t U=-\mathrm{grad}\,\Ef(U).
\end{equation*}
This shows that \eqref{eqn:KS1norm} has the structure of a gradient system.
A consequence of this is that the energy decreases on solutions of the KS system.  Indeed, if $f>0$, then
\begin{equation*}
\p_t\Ef(\rho,c)=-\LpNorm{2}{f(\rho)^\frac{1}{2}\nabla\lb
c-g(\rho)\, \rb}^2-\frac{1}{\varepsilon}\LpNorm{2}{\Delta
c+\rho-c}^2.
\end{equation*}

The gradient formulation for \eqref{KS3} is similar to the one for \eqref{eqn:KS1norm}. Instead of \eqref{eqn:Ef}, one uses the energy \eqref{energy}.  %Define the energy \begin{align} \label{expE}\Er(\rho)&= \int_{\R^2} (-\frac{1}{2}|\nabla \Delta^{-1}\rho|^2+\rho\ln\rho-\rho)\, dx.  %|_{c=-\Delta^{-1}\rho}.\end{align}
The latter is obtained from \eqref{eqn:Ef} by dropping the quadratic term $\frac{1}{2}c^2$, %(as \eqref{KS3} does not have a decay term for $c$),
replacing $c$ with $-\Delta^{-1} \rho$ in the remaining terms and using that $f(\rho)=\rho$.  %We remark that the first term of $\Er$ can be thought of as the internal energy of the system and the remaining terms are the entropy.
 The formal G\^{a}teaux derivative of $\Er$ is $\p_\rho\Er(\rho)\phi=\int (\Delta^{-1}\, \rho \,+ \ln\rho) \phi,$
and therefore the gradient in the metric   $\ip{v}{w}_J:=-\ip{v}{J^{-1} w}_{L^2}$, where $J:=\nabla\cdot\rho\nabla<0$, is
\begin{equation*}
%\p_\rho\Er(\rho)\phi=\int c\Delta\,\p_\rho c\,\phi+\phi\ln\rho.
\grad \Er(\rho)= -\nabla\cdot\rho\nabla (\Delta^{-1}\, \rho \,+ \ln\rho)=-\nabla\cdot\rho\nabla \Delta^{-1}\, \rho \,- \Delta\rho,\end{equation*}
which is the negative of the r.h.s. of  the first equation in \eqref{KS3} with  $c=-\Delta^{-1} \rho$.
%Since $\Delta c=-\rho$, we compute that $\Delta\,\p_\rho c\,\phi=-\phi$, and hence the $L^2$-gradient of $\Er$ is $\Er'(\rho)=-c+\ln\rho$.  Thus,
Hence the equation \eqref{KS3} can be written as %$\p_t\rho=\nabla\cdot\rho\nabla\Er(\rho)$, or
 $\p_t\rho=-\mathrm{grad}\, \Er(\rho)$ in the space with metric
$\ip{v}{w}_J:=-\ip{v}{J^{-1} w}_{L^2}$.  Again, the
energy $\Er$ decreases on solutions of \eqref{KS3}:
\begin{equation} \label{ptE}
\p_t\Er=\ip{\Er'}{I\Er'}=-\LpNorm{2}{\rho^\frac{1}{2}\nabla\Er'}^2.
\end{equation}
This can be thought of as an entropy monotonicity formula.

%%%%%%%%%%%%%%%%%%%%%%%%%%%%%%%%%%%%%%%%%%%%%%%%%%%%%%%%%%%%%%%%%%%%%%%%%%%%%%%%%%%%%%%%%%%%%%%%%%%%%%%%%%%%%%%%%%%%%%%%%%%%
The stationary solutions of \eqref{KS3} are critical points of the energy functional ${\cal E}$, given in \eqref{energy}, under the constraint that $\int\rho=const.$. Thus, they  satisfy ${\cal E}'(\rho)=C$,
where $C$ is a constant.  %This equation can be considered as the minimization of ${\cal E}(\rho)$ under constant mass constraint.
Explicitly ${\cal E}'(\rho)=C$ reads
\begin{align} \label{stateqn}
\log(\rho)+\frac{1}{\Delta}\rho=C\ \Leftrightarrow\
\Delta\log(\rho)+\rho=0\ \Leftrightarrow\ \Delta u+e^u=0,
\end{align}
where $u=\log(\rho)$.  Solutions to \eqref{stateqn} can be
written in the form of 'Gibbs states' $\rho=M \frac{e^c}{\int e^c}$ (see \cite{GaZa1998}),
%\begin{equation*}\end{equation*}
with the concentration $c$ considered as a negative potential (remember that $\Delta c = -\rho$).  In two dimensions, this equation has the solution
%\begin{equation*}
$R=\frac{8}{(1+|x|^2)^2}.$
This solution is a minimizer of ${\cal E}$ under the constraint that $\int\rho=8\pi.$

\section{Proof of Proposition \ref{thm:RELATabc}}
\label{sec:rel-abc}
In this appendix, we prove Proposition \ref{thm:RELATabc}, relating  the parameters $a, b$ and $c$, by evaluating the equations in \eqref{orth-eqns}.
\begin{proof}[Proof of Proposition \ref{thm:RELATabc}] Let $R_{i}(\phi):=
\ip{{\Labc}\phi}{\zic}-\ip{{\cal N}}{\zic}.$ Here and in what follows $i=0,1$.
The equations \eqref{orth-eqns} can be rewritten as
\begin{equation}\label{orth-eqns-0}
\ip{\Fabc}{\zeta_{bci}}+\ip{\phi}{\p_\tau\zic+(\p_\tau \ln \g_{a b})\zic}=R_i(\phi).
\end{equation}
We
begin with evaluating $\ip{\Fabc}{\zeta_{bci}}$ to \textit{leading order}. To this end, we begin with the elementary computation %\textbf{(here Mathematica was used)}
%\newpage
\begin{align}\label{de-est1} %{eq:fabc}
&\ip{1}{\zeta_{bci}}=2^{i-4}a^{-i-1}+%O(\frac{1}{a^i}\ln^{1-i}\frac{1}{a})
O(\frac{1}{a^i}\ln^{2}\frac{1}{a}),  \\
&\ip{\frac{1}{c+ y^2}}{\zeta_{bci}}= 2^{i-5}a^{-i}\ln^{1-i}\frac{1}{a}+%O(\ln^{i}\frac{1}{a}
O(\ln^{2i}\frac{1}{a}),  \label{de-est2}\\
&\ip{\frac{1}{(c+ y^2)^2}}{\zeta_{bci}}= 2^{-5}c^{i-1}\ln^i\frac{1}{a}+O(a^{1-i}\ln^{1-i}\frac{1}{a}), \label{de-est3}\\
& \ip{\frac{1}{(c+ y^2)^3}}{\zeta_{bci}}=2^{i-6}c^{i-2}+%O(a\ln^i\frac{1}{a})
 O(a\ln\frac{1}{a}).\label{de-est4}\end{align}
These estimates are proven at the end of this appendix. Using these estimates in \eqref{Fabc}, we arrive at
\begin{align}\label{eq:fabc}&\ip{\Fabc}{\zeta_{bci}}=- b_\tau 2^{i-2} a^{-i-1}\lb 1-c 2^{-1} a\ln^{1-i}\frac{1}{a}\rb +c_\tau\lb b 2^{i-3} a^{-i}\ln^{1-i}\frac{1}{a}-bc^i2^{-3}\ln^i\frac{1}{a}  \rb\nonumber\\
& -2^{i-2} bc a^{1-i}\ln^{1-i}\frac{1}{a}+(bc^id+abc^{i+1}2^{-2})\ln^i\frac{1}{a}-bc^id2^{i-1}\nonumber\\&
 +O(b_\tau a^{-i}\ln^{2}\frac{1}{a})+O((a_\tau+a)\ln^{2i}\frac{1}{a})+O((d+a+c_\tau) a^{1-i}\ln^{1-i}\frac{1}{a})+ O(d a\ln\frac{1}{a})\,.
\end{align}
Next, we compute the term $\ip{\phi}{\p_\tau\zic+(\p_\tau \ln \g_{a b c})\zic}$. Differentiating $\zic$ and $\p_\tau \ln \g_{a b}$ with respect to $\tau$, we obtain
\begin{equation} \label{dtauzita}
 \p_\tau\zic+(\p_\tau \ln \g_{a b c})\zic=
 \lsb 2b_\tau \ln(c+ y^2)+(2b-2^{1-i})\frac{c_\tau}{c+ y^2}-\frac{a_\tau}{2} y^2 \rsb\zic.
\end{equation}
Using in the case $i=1$ that $\phi$ is orthogonal to $\zeta_{bc0}$, we find
 \begin{align}\label{ip.phi.zdt}
\ip{\phi}{\p_\tau\zic+(\p_\tau \ln \g_{a b})\zic}=b_\tau S_{i1}(\phi)+c_\tau S_{i2}(\phi)-a_\tau S_{i3}(\phi), %S_i(\phi)-
%\O{\lsb|b_\tau|a^{-1-i}\ln\frac{1}{a}+i|a_\tau|a^{-3}\rsb\LpNorm{\infty}{\xi}}.
\end{align}
where
\begin{align*}S_{i1}(\phi)&:= 2\ip{\phi}{  \ln(c+ y^2)  \zic}\\
S_{i2}(\phi)&:=2\delta_{i0}( b-1)\ip{\phi}{\frac{1}{c+ y^2}\zic}\\
S_{i3}(\phi)&:= \frac{1}{2}\ip{\phi}{ y^2\zic}.
\end{align*}
Collecting \eqref{eq:fabc} and \eqref{ip.phi.zdt}, we have, for $i=0$,
\begin{align}\label{eq:i0}
&(  S_{03}(\phi) +O(1 ))a_\tau +  [ \frac{1}{4a}-\frac{c}{8}\ln(a^{-1})-S_{01}(\phi)+O(\ln^{2}\frac{1}{a})]b_\tau\notag\\
&-[\frac{b}{8}(\ln(a^{-1})-1)+S_{02}(\phi)+O( a \ln\frac{1}{a})]c_\tau %\nonumber\\
=-\frac{b}{2}[\frac{1}{2}ac\ln(a^{-1})-d-a c]-R_0(\phi)\,.
\end{align}
and, for $i=1$,
\begin{align}\label{eq:i1}
 (S_{13}(\phi)a+O(\ln^{2}\frac{1}{a}) )a_\tau&+[ \frac{1}{2a}- \frac{c}{4}-aS_{11}(\phi)+O(a^{-1}\ln^{2}\frac{1}{a})]b_\tau-[\frac{b}{4}-\frac{1}{8}abc\ln(a^{-1})+aS_{12}(\phi)+O( 1)]]c_\tau\nonumber\\
&=-abc[\frac{1}{2}- d\ln(a^{-1})- \frac{ac}{4}\ln(a^{-1})+d]-aR_1(\phi)\,.
\end{align}

 We manipulate  equations \eqref{eq:i0} and \eqref{eq:i1} and solve them for $b_\tau$ and $c_\tau$  to obtain
 \begin{align}\label{atauctau-eq}
 &- f_0 a_\tau+gc_\tau =\frac{1}{8}b c\ln(a^{-1})-\frac{b d}{4a}  + v_0+r_0,\  \quad - f_1 a_\tau+ gb_\tau =-\frac{b^2 d}{8} + v_1+r_1,\end{align}
where
\begin{align} %\label{eq:ctau}
&f_0:= (\frac{1}{2a}-\frac{c}{4}-a S_{11})S_{03} +(\frac{1}{4}+\frac{c}{8}a\ln(a^{-1})-aS_{01})S_{13},\nonumber\\
& g :=\frac{b}{16}\frac{\ln(a^{-1})}{a} - \frac{1}{4} \big(\frac{b}{2a}-\frac{b c}{8}-\frac{1}{16} b c^2 a\ln(a^{-1})^2 -\frac{1}{8} b c \ln(a^{-1})-(\frac{2}{a}-c)S_{02}-b S_{01} + S_{12}\big)\nonumber\\
&\qquad -\frac{a}{8}  \ln(a^{-1})\big(b c S_{01}+ b S_{11}- c S_{12} \big)+a \big(\frac{1}{8}b S_{11}-S_{02} S_{11}+S_{01} S_{12}\big),   \nonumber\\
&v_0:=  \frac{1}{8}b c [2d\ln(a^{-1})+3 +d+ c  a\ln(a^{-1})(d  \ln(a^{-1}) - \frac{1 }{2}-d)- c a-\frac{1}{4} a^2  c^2 \ln(a^{-1})^2], \nonumber
\end{align}%and
\begin{align}%\label{f1v1}
&r_0 := (\frac{1}{2a}-\frac{1}{4}c) R_0-\frac{1}{4}R_1 +  c a\ln(a^{-1}) (  b  d S_{01}-\frac{1}{8} R_1  )\nonumber\\
&\qquad  + a \big( \frac{1}{2} b c S_{01} +\frac{1}{2}b d S_{11}+R_0 S_{11}+ S_{01}R_1\big)  +\frac{1}{2}a^2 b c (  S_{11}-\frac{1}{2}\ln(a^{-1})  (c S_{01}+ S_{11} )), \nonumber\\
&f_1 :=a(\frac{b}{8}\ln(a^{-1})-\frac{b}{8}+S_{02}) S_{13}-(\frac{b}{4}-\frac{bc}{8}a\ln(a^{-1})+aS_{12})S_{03},\nonumber
\end{align}
\begin{align}
&v_1:=- \frac{1}{8} b^2 c[  d a\ln^2(a^{-1}) - \frac{3}{2} d a\ln(a^{-1}) - a (\frac{1}{2}-  d)+ \frac{1}{4}   c  a^2  \ln(a^{-1})],\nonumber\end{align}
\begin{align} & r_1:=-\frac{b}{4} R_0+b (\frac{ 1}{8} c R_0+-  d c S_{02}+\frac{1}{8}  R_1 )a\ln(a^{-1}) , \nonumber\\
&\qquad + a \big(\frac{bc S_{02}}{2}+b c d S_{02}+\frac{b d S_{12}}{2}-R_0 S_{12}-\frac{1}{8}b R_1+ R_1 S_{02}\big),\nonumber\\
&\qquad +\frac{1}{2} b ca^2 \big( S_{12}+\frac{1}{2} c S_{02} \ln(a^{-1})+\frac{1}{2}  S_{12} \ln(a^{-1})\big).\nonumber
\end{align}

Next, we derive estimates on $S_{i1}(\phi),\ S_{i2}(\phi),\ R_0(\phi)$ and $R_1(\phi)$.  Using the Cauchy-Schwarz inequality and simple modifications of the estimates \eqref{zbc-norms}, we arrive at the estimates
 \begin{align}\label{Si1-est}
|S_{i1}(\phi)|&\lesssim\LpNorm{2}{\phi}a^{-i}\ln(a^{-1})^{(3-i)/2},\  |S_{i2}(\phi)|\lesssim
\|\phi\|\ln(a^{-1})^{\frac{i}{2}}\,,\ %\\ & \label{Si3-est}
 |S_{i3}(\phi)|\lesssim\LpNorm{2}{\phi}a^{-(i+1)}.
\end{align}

As was shown above, the operator $\Labc$ is self-adjoint in the inner product \eqref{ip-bc} and hence $\ip{{\Labc}\phi}{\zic}=\ip{\phi}{{\Labc}\zic}$. %\ipp{\g_{a b}^{1/2}\xi}{L_{a b}\g_{a b}^{1/2}\zi},$$ where $\ipp{\xi}{\eta}:=\int \xi\eta y^3 dy$ is the inner product in $L^2([0,\infty), y^3 dy)$.
  Using \eqref{LZeta} and the fact that $\z_{bc0}$ is orthogonal to $\phi$, we obtain the estimate
\begin{align}\label{LXiZeta-est}
&|\ip{\Labc\phi}{\zic}|\lesssim \LpNorm{2}{\phi}(d+a)^{1-i}\,.
\end{align}
Lastly, we estimate $\ip{\Nab}{\zic}$ which can be written, using integration by parts, in the form
\begin{equation*}
 \ip{\Nab}{\zic}=-\frac{1}{2}\int_0^\infty \phi^2\p_y(\g_{abc}y^2\zic)\, dy,
 %-\frac{1}{8}\int_0^\infty \xi^2[2 i y^{2i-1}-a y^{2i+1}]\g_{abc}\, dy,
\end{equation*}
where,  recall, $\g_{abc}$ is the gauge function (see \eqref{gauge-bc}).
Here we used that $y\phi\rightarrow 0$ as $y\rightarrow\infty$ so that the boundary terms vanish.
 Using that
 $$\p_y(\g_{abc}y^2\zic)=\p_y(\frac{(c+ y^2)^{2b}}{16 y^2}e^{-\frac{a}{2} y^2}\zic)=[(\frac{4b}{c+ y^2}-\frac{2}{y^2}-a)\zic+y^{-1}\p_y\zic]\g_{abc}y^3,$$
  we find
\begin{equation}\label{ip.N.zeta.i-est}
 |\ip{\Nab}{\zic}|\lesssim \LpNorm{2}{(c+ y^2)^{-\frac{2-i}{2}}\phi}^2
 + a\LpNorm{2}{(c+ y^2)^{-\frac{1-i}{2}}\phi}^2.%\LpNorm{2}{\phi_i\xi}^2.
\end{equation}
Estimates \eqref{LXiZeta-est} and \eqref{ip.N.zeta.i-est} give
\begin{equation}\label{Ri-est-c}%{RRemainderEst}
 |R_{i}(\phi)|\lesssim (d+a)^{1-i}\LpNorm{2}{\phi}+\LpNorm{2}{(c+ y^2)^{-\frac{2-i}{2}}\phi}^2\,.
\end{equation}
Since we assumed $d\ls a\ln(a^{-1})$, the above estimates imply the following inequalities for $ f_{i}$ and $ r_i$
 \begin{align}\label{fi-ri-est}
|f_i |&\ls a^{i-2} (\ln\frac{1}{a})^{i}\LpNorm{2}{\phi},\\
 |r_i |&\ls a^{i-1} |R_{0}(\phi)|+(a\ln(a^{-1}))^{i} |R_{1}(\phi)| \lesssim  a^{i-1}  [(d+a)\LpNorm{2}{\phi}+\LpNorm{2}{(c+ y^2)^{-1}\phi}^2]\notag\\
 &+(a\ln(a^{-1}))^{i} [\LpNorm{2}{\phi}+\LpNorm{2}{(c+ y^2)^{-\frac{1}{2}}\phi}^2]\notag\\
 & \lesssim  (a\ln(a^{-1}))^{i}\LpNorm{2}{\phi}+a^{i-1}\LpNorm{2}{\phi}^2, \end{align}

The estimates \eqref{Si1-est} %and \eqref{Si2-est}
show that $g=\frac{\ln(a^{-1})}{a}(1+o(1))$, provided
 \begin{align}\label{Sij-est-c}
&a^{-1} |S_{02}|,\  |S_{01}|,\  |S_{02}|,\  |S_{12}|,\ a \ln(a^{-1}) |S_{11}|,\    a |S_{02}|   |S_{11}|,\  a |S_{01}|   |S_{12}| \ll a^{-1}\ln(a^{-1}), \end{align}  %for $(ij)=(01),\ (02),\ (11),\ (12)$,
which holds, provided  $ \|\phi\| \ll 1$.  %Similarly, we have that  $g_2\ge  c\frac{\ln(a^{-1})}{a}$,  provided  $ \|\phi\| \ll 1$ and
Therefore $g$ is invertible and its inverse is of the form $g^{-1}=  \frac{a}{\ln(a^{-1})}(1-o(1))$. Hence the equations  \eqref{atauctau-eq} can be rewritten as \eqref{abc-eq1} -- \eqref{abc-eq2}, with %$w_{i}(a, b, c) =\frac{v_i}{g},\$
$  S_{i}(\phi, a, b, c) =\frac{f_i}{g}$ and $ \mathcal{R}_i(\phi, a, b, c)=\frac{r_i}{g}$. Then the estimates of $f_i,\ r_i$ and $g$ given above, imply \eqref{Si} -- \eqref{Rem-i}.

\paragraph{Proof of estimates \eqref{de-est1} - \eqref{de-est4}.} \label{sec:de-est}
Use $\e^{-ay^2/2}=-\frac{1}{ay}\partial_y\e^{-ay^2/2}$ and integration by parts to obtain
\begin{align}
\ip{1}{\zeta_{bc0}}&=\frac{1}{16}\int_0^\infty y(c+y^2)^{2d}\e^{-a\frac{y^2}{2}}  \d y\\
&= \frac{1}{16a}( c^{2d}+4d\int_0^\infty (c+y^2)^{2d-1}y\e^{-a\frac{y^2}{2}}  \d y)\,.
\end{align}
To extract the leading part in the last integral above we rescale $y\to\sqrt{a}y$ to obtain
\begin{align*}
\int_0^\infty (c+y^2)^{2d-1}y\e^{-a\frac{y^2}{2}}  \d y=a^{-2d}\int_0^\infty (ac+y^2)^{2d-1}y\e^{-\frac{y^2}{2}}  \d y\,.
\end{align*}
Next, split the integral up into
\begin{align*}
\int_0^\infty (ac+y^2)^{2d-1}y\e^{-\frac{y^2}{2}}  \d y=\int_0^1 (ac+y^2)^{2d-1}y\e^{-\frac{y^2}{2}}  \d y+\int_1^\infty (ac+y^2)^{2d-1}y\e^{-\frac{y^2}{2}}  \d y\,.
\end{align*}
The second term on the left hand side is uniformly bounded in $a,d$ small, so it suffices to investigate the first term. Write
\begin{align*}
\int_0^1 (ac+y^2)^{2d-1}y\e^{-\frac{y^2}{2}}  \d y=\int_0^1 (ac+y^2)^{2d-1}y\d y+\int_0^1 (ac+y^2)^{2d-1}y(\e^{-\frac{y^2}{2}}-1)  \d y  \,,
\end{align*}
where again the second term is uniformly bounded in $a,d$ small. Explicit integration in the first term yields
\begin{align*}
\int_0^1 (ac+y^2)^{2d-1}y\d y =\frac{1}{4d}((1+ac)^{2d}-(ac)^{2d})  \,.
\end{align*}
By assumption, there is an $\eps>0$c such that $d(a)\leq a^{\eps}$. In particular, $a^{d}\to 1$ as $a\to 0$ so
\begin{align*}
(1+ac)^{2d}-(ac)^{2d}=(1+2d O(ac))-(1+2d O(\ln ac))=O(d\ln\frac{1}{a})\,,
\end{align*}
yielding
\begin{align*}
\ip{1}{\zeta_{bc0}}=\frac{1}{16a}(1+O(d\ln\frac{1}{a}))\,.
\end{align*}
The remaining terms are estimated similarly.
\end{proof}

\bibliographystyle{plain}

\bibliographystyle{plain}

\end{document}